\pdfoutput=1
\documentclass{article}

\usepackage[T1]{fontenc}
\usepackage{graphicx}
\usepackage{graphicx} 
\usepackage[all,pdf]{xy}
\usepackage{amsmath}
\usepackage{amssymb}
\usepackage{theorem}
\usepackage{booktabs}
\usepackage{stmaryrd}

\newcommand{\sa}{\dagger} 
\newcommand{\sai}{\ddagger} 
\newcommand{\pre}{\sf{pre}}
\newcommand{\safe}{\blacktriangle}
\newcommand{\ext}[1]{||#1||}
\newcommand{\pow}{\mathcal{P}}
\newcommand{\cl}{\mathit{cl}}
\newcommand{\onlyone}{\mathit{onlyone}}
\newcommand{\cignorant}{\mathit{cignorant}}
\newcommand{\bbullet}{\circ}

\newcommand{\lang}{\mathcal{L}}
\newcommand{\safelang}{\ensuremath{\lang_\safe}}
\newcommand{\sailang}{\ensuremath{\lang_\sai}}
\newcommand{\salang}{\ensuremath{\lang_\sa}}
\newcommand{\saifelang}{\ensuremath{\lang_{\sai\safe}}}
\newcommand{\ellang}{\ensuremath{\lang_K}}
\newcommand{\pallang}{\ensuremath{\lang_!}}
\newcommand{\amllang}{\ensuremath{\lang_\otimes}}

\newcommand{\safesys}{\ensuremath{\mathbf{S\safe}}}

\usepackage{soul}
\usepackage{todonotes}

\newtheorem{proposition}{Proposition}

\newtheorem{definition}{Definition}
\newtheorem{lemma}{Lemma}
\newtheorem{theorem}{Theorem}
\newtheorem{corollary}{Corollary}
\newenvironment{proof}{\paragraph{Proof:}}{\hfill$\square$}

\begin{document}
\title{Anonymous Public Announcements}

\author{Thomas {\AA}gotnes\\
  University of Bergen, Bergen, Norway\\
  Shanxi University, Taiyuan, China\\
  \texttt{thomas.agotnes@uib.no}
\and
Rustam Galimullin\\University of Bergen, Bergen, Norway\\
\texttt{Rustam.Galimullin@uib.no}
\and
Ken Satoh\\Center for Juris-Informatics, Tokyo, Japan\\
\texttt{ksatoh@nii.ac.jp}
\and
Satoshi Tojo\\Asia University, Tokyo, Japan\\
\texttt{tojo.satoshi@gmail.com}
}
\date{}

\maketitle             

\begin{abstract}
 We formalise the notion of an \emph{anonymous public announcement} in the tradition of public announcement logic. Such announcements can be seen as in-between a public announcement from ``the outside" (an announcement of $\phi$) and a public announcement by one of the agents (an announcement of $K_a\phi$): we get more information than just $\phi$, but not (necessarily) about exactly who made it. Even if such an announcement is \emph{prima facie} anonymous, depending on the background knowledge of the agents it might reveal the identity of the announcer: if I post something on a message board, the information might reveal who I am even if I don't sign my name. Furthermore, like in the Russian Cards puzzle, if we assume that the announcer's \emph{intention} was to stay anonymous, that in fact might reveal more information. In this paper we first look at the case when no assumption about intentions are made, in which case the logic with an anonymous public announcement operator is reducible to epistemic logic. We then look at the case when we assume common knowledge of the intention to stay anonymous, which is both more complex and more interesting: in several ways it boils down to the notion of a ``safe" announcement (again, similarly to Russian Cards). Main results include formal expressivity results and axiomatic completeness for key logical languages.
\end{abstract}

\section{Introduction}

Taken at the face value, the title of this paper seems to be an oxymoron. Indeed, if ``public announcement" is taken literally, as in an agent \emph{saying} something in front of everyone else, it will not be anonymous. However, anonymous public communication is almost ubiquitous in our day-to-day lives. Think of posts on social media and message boards done under a username instead of a real name of the poster. Or an anonymous letter to an editor of a news outlet. Other examples include anonymous emails, transactions on a blockchain, whistle blower reports, and even cultural artifacts created anonymously under an alias, like Elena Ferrante and MF DOOM.

The type of anonymity we are interested here focuses on \textit{action anonymity}, i.e. the inability of an attacker to identify who performed a given action (also sometimes referred to as \textit{unlinkability} in the literature \cite{anon_terminology}). This is in contrast to \textit{data anonymity}, i.e. the inability of an attacker to know the identity of a subject in an anonymised database, e.g. in medical records (see, e.g., \cite{domingo-ferrer16}). One of the standard requirements to both types of anonymity is that they satisfy $k$-anonymity \cite{sweeney02}, which intuitively means that a data record or an action cannot be distinguished from at least $k-1$ other records or actions. It is clear that in the case of public communication, we should have at least 3-anonymity. Indeed, if a public announcements is so specific that it could be done only by two agents (and these two agents know it), then the non-announcing agent would be able to deduce the identity of the announcer. 

In this paper we formalise anonymous public announcements inspired by \textit{public announcement logic} (PAL) \cite{plaza89}, an extension of multi-agent \textit{epistemic logic} with constructs of the form $[\phi!]\psi$ intuitively meaning that after $\phi$ is truthfully announced, $\psi$ is true. In PAL the announced formula $\phi$ does not have to actually be known by any agent in the system -- the identity of the announcer is left out of the picture. If the announcer indeed is one of the agents $a$ in the system, the announcement in fact contains \emph{more} information: in that case it would be modelled by the announcement $K_a\phi!$. In this paper we formalise \emph{anonymous} public announcements, conceptually somewhere in-between $\phi!$ and $K_a\phi!$ -- we get \emph{more} information than just $\phi$ but \emph{less} than $K_a\phi$ for a specific agent $a$.  We thus first introduce modalities of the form $[\psi \sa]$ where $[\psi \sa] \phi$ that are read as ``after $\psi$ is \textit{pseudo-anonymously} announced, $\phi$ is true no matter who the announcer was". Such announcements are pseudo-anonymous because we do not assume that the announcer necessarily \emph{intends} to stay anonymous. Indeed, the reader may have noticed that not all the examples of anonymous public communication we mentioned above are guaranteed to preserve the identity of the announcer.  For example, if the announced information is known only to two agents (and the non-announcing agent knows it), then the identity of the announcer will not be a secret to the non-announcing agent. In the literature, such a scenario is called ``background knowledge attack" \cite{machanavajjhala07}.  Thus, we also formalise \textit{intentional} (or \textit{safe}) \textit{anonymous announcements} by introducing constructs $[\psi \sai]\phi$ meaning ``after the safe announcement of $\psi$, $\phi$ is true no matter who the announcer was". In order to capture safety, we employ the idea of (at least) 3-anonymity, and agents intending to stay anonymous should know that their announcement is safe. A bit more formally, we introduce the safety modality $\safe \psi$ which, similarly to common knowledge, is defined via a fixpoint. Hence, $\safe \psi$ means that ``$\psi$ is true, and that there is a group of three agents that know that $\psi$, and there is a group of three agents that know that there is a group of three agents that know that $\psi$ and so on". In other words, safe anonymous announcements guarantee the anonymity of announcers.

Reasoning about anonymity based on (variants of) epistemic logic has been studied in \cite{syverson99,halpernO05}, with $k$-anonymity being discussed in \cite{halpernO05}. Building on the runs-and-systems approach of \cite{halpernO05}, further extensions focusing on privacy and onymity \cite{tsukada09,tsukada6} and electronic voting \cite{mano10}. A knowledge-based approach to data anonymity was presented in \cite{jiang25}. Other logical approach to anonymity and privacy include \cite{schneiderS96,hughesS04,ye07,barthe14,naumovO23}.
The themes of anonymity and privacy are also related to the research on secrets in multi-agent systems (see, e.g., \cite{halpernO08,moreN11,moreN11b,xiongA23}). None of these approaches model anonymous \emph{announcements}.
Finally, an approach to anonymity based on \textit{dynamic epistemic logic} (DEL) \cite{vDvdHK2007} was presented in \cite{VANEIJCK2007159}, where the authors consider scenarios of secret communication between agents in a system. Such secret communication is captured by private announcements, as special type of \textit{action models} \cite{DBLP:journals/synthese/BaltagM04}. In this setting, verifying whether secret communication remained secret boils down to model checking an epistemic formula in the resulting updated model, i.e. the model that is obtained after an application of a dynamic operator.  

While our approach is also DEL-based, in our work we focus on anonymous \textit{public} communication, as opposed to \textit{private} communication. Moreover, we tackle the issue of intentional anonymity, where the announcing agent is guaranteed to stay anonymous. To this end, we introduce a novel safety modality that, to the best of our knowledge, has never been considered in the literature before.

We start out by introducing the machinery of EL and DEL in Section \ref{sec:background}. In Section \ref{sec:unsafe}, we study pseudo-anonymous announcements, and in particular show that while the resulting logic is as expressive as EL, standard and pseudo-anonymous announcements are update incomparable. After that, in Section \ref{sec:intentions}, we formalise intentional anonymous announcements using the safety modality, as well as show that, perhaps surprisingly, safety is all one needs: logics with safe announcements are as expressive as EL extended with the safety modality. Moreover, we provide a sound an complete axiomatisation of the latter. Finally, in Section \ref{sec:discussion} we discuss further research directions.

\section{Background}
\label{sec:background}
Let $N$ be a finite set of \textit{agents}, and $P$ be a countable set of \textit{propositional variables}. 

All logics that we are dealing with in this paper are interpreted on epistemic models.

\begin{definition}
    An \emph{(epistemic) model} is a triple $M= (S, \sim, V)$, where $S$ is a non-empty set of states, $\sim: N \to 2^{S \times S}$ is an equivalence relation for each $i\in N$, and $V:P \to 2^S$ is the valuation function. For $s \in S$, a pair $M,s$ is called a \emph{pointed (epistemic) model}. 
\end{definition}

Before we define some basic logical languages, we need to introduce a special kind of models that will be used in syntax.

\begin{definition}
    Let $\lang$ be a language defined over the signature $\langle N, P \rangle$. An \emph{action model} is a triple $\mathsf{M = (S, \sim, pre)}$, where $\mathsf{S}$ is a non-empty set of states, $\sim: N \to 2^{\mathsf{S} \times \mathsf{S}}$ is an equivalent relation for each $i\in N$, and $\mathsf{pre}: \mathsf{S}\to\lang$ is the precondition function. 
    For $\mathsf{s} \in \mathsf{S}$, we will call a pair $\mathsf{M},\mathsf{s}$ a \emph{pointed action model}.
\end{definition}

\begin{definition}
Languages of \emph{epistemic logic} (\ellang), \emph{public announcement logic} (\pallang), and \emph{action model logic} (\amllang) are defined by the following BNFs:
       \begin{alignat*}{3}
        &\ellang &&\thinspace \ni && \enspace \phi ::= p \mid \neg \phi \mid (\phi \land \phi) \mid K_i \phi\\
         &\pallang &&\thinspace \ni && \enspace \phi ::= p \mid \neg \phi \mid (\phi \land \phi) \mid K_i \phi \mid [\phi !]\phi\\
          &\amllang &&\thinspace \ni && \enspace \phi ::= p \mid \neg \phi \mid (\phi \land \phi) \mid K_i \phi \mid [\pi] \phi\\
          & && &&\enspace \pi:=(\mathsf{M,s}) \mid \pi \cup \pi
\end{alignat*}
where $p \in P$, $i \in N$, and $(\mathsf{M,s})$ is a pointed action model with a finite set of states $\mathsf{S}$, and such that for all $\mathsf{s \in S}$, precondition $\mathsf{pre(s)}$ is some $\phi \in \amllang$ that was constructed in a previous stage of the inductively defined hierarchy.
\end{definition}

Having defined models and languages, we are now ready to provide the definition of the semantics of the aforementioned logics.

\begin{definition}
    Let $M,s = (S, \sim, V)$ be a model, $p \in P$, $i \in N$, and $\mathsf{(M,s)}$ and be an action model.
\begin{center}
$\begin{array}{lll}
	M,s \models p & \text{iff} & s \in V(p)\\
	M,s \models \lnot \phi & \text{iff} & M,s \not \models \phi\\
	M,s \models \phi \land \psi & \text{iff} & M,s \models \phi \text{ and } M,s \models \psi\\
	M,s \models K_i \phi & \text{iff} & M,t \models \phi \text{ for all  } t \in S \text{ such that } s \sim_i t \\
    M,s \models [\psi!] \phi & \text{iff} & M,s \models \psi \text{ implies } M,s^{\psi!} \models \phi\\
    M,s \models [\mathsf{M,s}] \phi  & \text{iff} & M,s \models \mathsf{pre(s)} \text{ implies } (M \otimes \mathsf{M}, (s,\mathsf{s})) \models \phi \\
    M,s \models [\pi \cup \rho] \phi  & \text{iff} &  M,s \models [\pi] \phi \text{ and } M,s \models [\rho] \phi 
\end{array}$
\end{center}
Given $M$, we will write $\llbracket \phi \rrbracket_M$ for the set $\{s \in S \mid M,s \models \phi\}$.

The updated model $M^{\phi!}$ is $(S^{\phi!}, \sim^{\phi!}, V^{\phi!})$, where $S^{\phi!} = \llbracket \phi \rrbracket_M$, $\sim_i^{\phi!} = \sim_i \cap (S^{\phi!} \times S^{\phi!})$ for all $i \in N$, and $V^{\phi!} (p) = V(p) \cap \llbracket \phi \rrbracket_M$ for all $p \in P$.

Updated model $M \otimes \mathsf{M}$ is $(S', \sim',V')$, where $S' = \{(s, \mathsf{s}) \mid s\in S, \mathsf{s \in S}, M,s \models \mathsf{pre(s)}\}$, $(s, \mathsf{s})R'_i(t, \mathsf{t})$ iff $s\sim_i t$ and $\mathsf{s}\sim_i\mathsf{t}$, and $(s, \mathsf{s})\in V'(p)$ iff $s\in V(p)$.
 
We call a formula $\phi$ \emph{valid}, or a \emph{validity}, if for all $M,s$ it holds that $M,s \models \phi$.
\end{definition}

\begin{definition}
Let $M^1 = (S^1, \sim^1, V^1)$ and $M^2 = (S^2, \sim^2, V^2)$ be two epistemic models. We say that $M^1$ and $M^2$ are \emph{bisimilar} (denoted $M^1 \leftrightarrows M^2$) if there is a non-empty relation $Z \subseteq S^1 \times S^2$, called \emph{bisimulation}, such that for all $sZt$, the following conditions are satisfied:
\begin{description}
\item[\textit{Atoms}] for all $p \in P$: $s \in V^1(p)$ if and only if $t \in V^2(p)$,
\item[\textit{Forth}] for all $i \in N$ and  $u \in S^1$ s.t. $s\sim^1_i u$, there is a $v \in S^2$ s.t. $t\sim^2_i v$ and $uZv$,
\item[\textit{Back}] for all $i \in N$ and $v \in S^2$ s.t $t\sim^2_i v$, there is a $u \in S^1$ s.t. $s\sim^1_i u$ and $uZv$. 
\end{description} 
We say that $M^1,s$ and $M^2,t$ are bisimilar and denote this by ${M^1,s \leftrightarrows M^2,t}$ if there is a bisimulation linking states $s$ and $t$.
\end{definition}

It is a standard result that $M^1,s \leftrightarrows M^2,t$ implies $M^1,s \models \phi$ if and only if $M^2,t \models \phi$ for $\phi \in \ellang$(see, e.g, \cite{goranko07}). In other words, epistemic logic is bisimulation invariant over epistemic models. We also have that \pallang\ and \amllang\ are bisimulation invariant over epistemic models \cite[Chapter 5]{vDvdHK2007}.

\begin{definition}
    Let $\lang_1$ and $\lang_2$ be two languages, and let $\phi \in \lang_1$ and $\psi \in \lang_2$. 
We say that $\phi$ and $\psi$ are \emph{equivalent}, when for all models $M_s$: $M_s \models \phi$ if and only if $M_s \models \psi$. 

If for every $\phi \in \lang_1$ there is an equivalent $\psi \in \lang_2$, we write $\lang_1 \preccurlyeq \lang_2$ and say that $\lang_2$ is \emph{at least as expressive as} $\lang_1$. We write $\lang_1 \prec \lang_2$ iff $\lang_1 \preccurlyeq \lang_2$ and $\lang_2 \not \preccurlyeq \lang_1$, and we say that $\lang_2$ is \emph{strictly more expressive than} $\lang_1$. Finally, if $\lang_1 \preccurlyeq \lang_2$ and $\lang_2 \preccurlyeq \lang_1$, we say that $\lang_1$ and $\lang_2$ are \emph{equally expressive} and write $\lang_1 \approx \lang_2$.
\end{definition}

The classic result in DEL is that $\pallang \approx \amllang \approx \ellang$ \cite[Chapter 8]{vDvdHK2007}, i.e. that any formula with public announcements or action models can be reduced to an equivalent formula of epistemic logic. 

While \pallang and \amllang are equally expressive, in some sense action models are more powerful than public announcements. Indeed, a public announcement of $\psi$ can be modelled by the single-state action model $\mathsf{M}^\psi = (\{\mathsf{s}, \{\mathsf{s} \sim_i \mathsf{s}| i \in N\}, \mathsf{pre}(\mathsf{s}) = \psi\})$. At the same time, not every action model can be modelled by a public announcement: action models may result in the bigger updated models, while public announcements result in the updated models of the size of at most the size of the original one. We can capture this disctinction formally.

\begin{definition}
    Let $\lang_1$ and $\lang_2$ be two languages with dynamic operators.
If for every update $[\alpha]$ of $\lang_1$ there is an update $[\beta]$ of $\lang_2$ such that for all $M,s$, the update of $M,s$ with $\alpha$ is bisimilar to the update of $M,s$ with $\beta$, 
then we say that $\lang_2$ is \emph{at least as update expressive as} $\lang_1$ and write $\lang_2 \preccurlyeq^U \lang_1$. We write $\lang_1 \prec^U \lang_2$ iff $\lang_1 \preccurlyeq^U\lang_2$ and $\lang_2 \not \preccurlyeq^U \lang_1$, and we say that $\lang_1$ is \emph{strictly less update expressive than} $\lang_2$.If $\lang_1 \preccurlyeq^U \lang_2$ and $\lang_2 \preccurlyeq^U \lang_1$, we say that $\lang_1$ and $\lang_2$ are \emph{equally update expressive} and write $\lang_1 \approx^U \lang_2$. Finally, if $\lang_1 \not \preccurlyeq^U \lang_2$ and $\lang_2 \not \preccurlyeq^U \lang_1$, we say that $\lang_1$ and $\lang_2$ are \emph{incomparable in update expressivity} and write $\lang_1 \not \approx^U \lang_2$.
\end{definition}

See \cite{vanditmarsch20} for a more thorough discussion on update expressivity as well as an overview of update expressivity results for some DELs, including $\pallang \prec^U \amllang$.

\section{Unintentional Anonymity}
\label{sec:unsafe}

We are interested in anonymous public communication that we encounter in anonymous P2P networks, Internet forums, some blockchains, and so on. We can also think about the event of ``all the agents come into the classroom, and see that someone (one of the agents) has written $\phi$ on the blackboard". 
This announcement is fundamentally anonymous in the sense that the announcers' identity is not explicitly revealed, but it might be \emph{implicitly} revealed in certain situations depending on the agents' background knowledge. A trivial example is if you write something on the blackboard that I know that only you know. We thus call such events \emph{pseudo-anonymous announcements}. 

Unlike a standard public announcement of $\phi$ by $a$, a pseudo-anonymous announcement of $\phi$ by $a$ is no longer modeled by a public announcement of the formula $K_a\phi$, since the announcement is not signed by $a$. On the other hand, it is also not modeled by a public announcement of $\phi$, since the event also contains information that \emph{some} agent knows $\phi$. The example ``someone has written $\phi$ on the blackboard" hints to a public announcement of ``someone knows $\phi$", i.e., $\bigvee_{i \in N}K_i\phi$. 

However, updating the model by removing states where $\bigvee_{i \in N}K_i\phi$ is false also does not work, because, e.g., it does not take into account the fact that the announcer knows who the announcer is.  Indeed, this also shows that pseudo-anonymous announcements \emph{cannot be modeled by public announcements at all, because the former are not deterministic}\footnote{We will make this observation more precise in Proposition \ref{prop:updexp}.}:  it might be that $a$ announced $\phi$, and it might be that $b$ announced $\phi$, and these are two different model updates that need to be considered separately.

We will do exactly that, by first defining the updated model and then defining the semantics a new dynamic modality $[\phi\sa]\psi$ in terms of that. Similar to the standard public announcement operator $[\phi!]$, 
$[\phi\sa]\psi$
is intended to mean that $\psi$ is true after $\phi$ is pseudo-anonymously announced.  There is a very direct relationship to action models, which we make explicit in Section \ref{sec:actionmodels} below. 

\begin{definition}
    The \emph{update of epistemic model} $M = (S,\sim,V)$ by the pseudo-anonymous announcement of $\phi$ is the epistemic model $M^{\phi\sa} = (S',\sim',V')$ where:
    \begin{itemize}
        \item $S' = \{(s,a) \mid  s \in S, a \in N \text{ and } M,s \models K_a \phi\}$
        \item $(s,a) \sim'_c (t,b)$ iff $s \sim_c t$ and $a = c$ iff $b = c$ 
        \item $V'(s,a) = V(s)$
    \end{itemize}
\end{definition}

Intuitively, in the updated model it is common knowledge that someone had (truthfully) announced/posted/written on the blackboard that $\phi$. A state $(s,a)$ in the updated model corresponds to that someone being $a$, in state $s$ of the original model. The precondition is that $a$ knows $\phi$, corresponding to the assumption that anonymous public announcements are truthful. An agent can only discern between a situation $(s,a)$ where $a$ made the pseudo-anonymous announcement and a situation $(t,b)$ where a different agent $b$ did, if she could already discern between $s$ and $t$ before the announcement or she is neither $a$ nor $b$.

\begin{definition}
The language of \emph{pseudo-anonymous public announcement logic} \salang is recursively defined by the BNF 
       \begin{alignat*}{3}
           \phi ::= p \mid \neg \phi \mid (\phi \land \phi) \mid K_i \phi \mid [\phi\sa]\phi
\end{alignat*}
Intuitively, $[\phi\sa]$ is the event ``$\phi$ is pseudo-anonymously announced". $[\phi\sa]\psi$ means that $\psi$ is necessarily true after that event, no matter who the announcer was. The dual is defined as $\langle\phi\sa\rangle = \lnot [\phi\sa]\lnot\psi$, and intuitively means that there is an agent who can pseudo-anonymously announce $\phi$ such that $\psi$ will be true afterwards. 
\begin{center}
$\begin{array}{lll}
    M,s \models [\phi\sa]\psi & \text{iff} & \forall a \in N: M,s \models K_a\phi \text{ implies } M^{\phi\sa}, (s,a) \models \psi
\end{array}$
\end{center}
\end{definition}

For an example, see Figure \ref{fig:update}. Let $M$ be the epistemic model in the middle. The figure also show three different updates related to announcements about $p$: someone outside the system announced $p$ (top); it was announced that someone inside the system knows $p$ (bottom right); someone inside the system pseudo-anonymously announced that they know $p$ (bottom left).

\begin{figure}[t!]
    \centering
\[\xymatrix{
&&\bullet^{p,q}_u\ar@{..}[r]^{ab}&\bullet^{p,r}_s\ar@{..}[d]^{b}\ar@{..}[r]^b &\bullet^{p}_v\\
&&&\bullet^{p,o}_x\ar@{..}[ul]^b\ar@{..}[ur]^b\\
&&&\uparrow p!\\ 
&\bullet^{\neg p}_t\ar@{..}[r]^{bc}&\bullet^{p,q}_u\ar@{..}[r]^{ab}&\bullet^{p,r}_s\ar@{..}[d]^{b}\ar@{..}[r]^{b}&\bullet^{p}_v\ar@{..}[r]^{ab}&\bullet^{\neg p}_w\\
&&&\bullet^{p,o}_x\ar@{..}[ull]^{ab}\ar@{..}[ul]^b\ar@{..}[ur]^b\ar@{..}[urr]^c\\
&&\swarrow p\sa && \searrow \bigvee_{i \in N} K_i p!\\ 
&\bullet^{p,r}_{(s,a)}\ar@{..}[dd]^{b}\ar@{..}[dr]^{b}\\
\bullet^{p,q}_{( u,a)}\ar@{..}[rr]_{b}\ar@{..}[ur]^{ab}\ar@{..}[dr]^{b}&&\bullet^{p}_{(v,c)}&& \bullet^{p,q}_u\ar@{..}[r]^{ab}&\bullet^{p,r}_s\ar@{..}[r]^b&\bullet^{p}_v\\
&\bullet^{p,r}_{(s,c)}\ar@{..}[ur]^{b}
}\]
    \caption{Three-agent epistemic model (middle) and its update after the event ``somebody pseudo-anonymously said $p$" (bottom left) as well as the update after the public announcement of ``somebody knows $\phi$'' (bottom right) and the update after the public announcement ``$\phi$ is true" (top). (Assume transitive closure, not all edges are shown.)}
    \label{fig:update}
\end{figure}
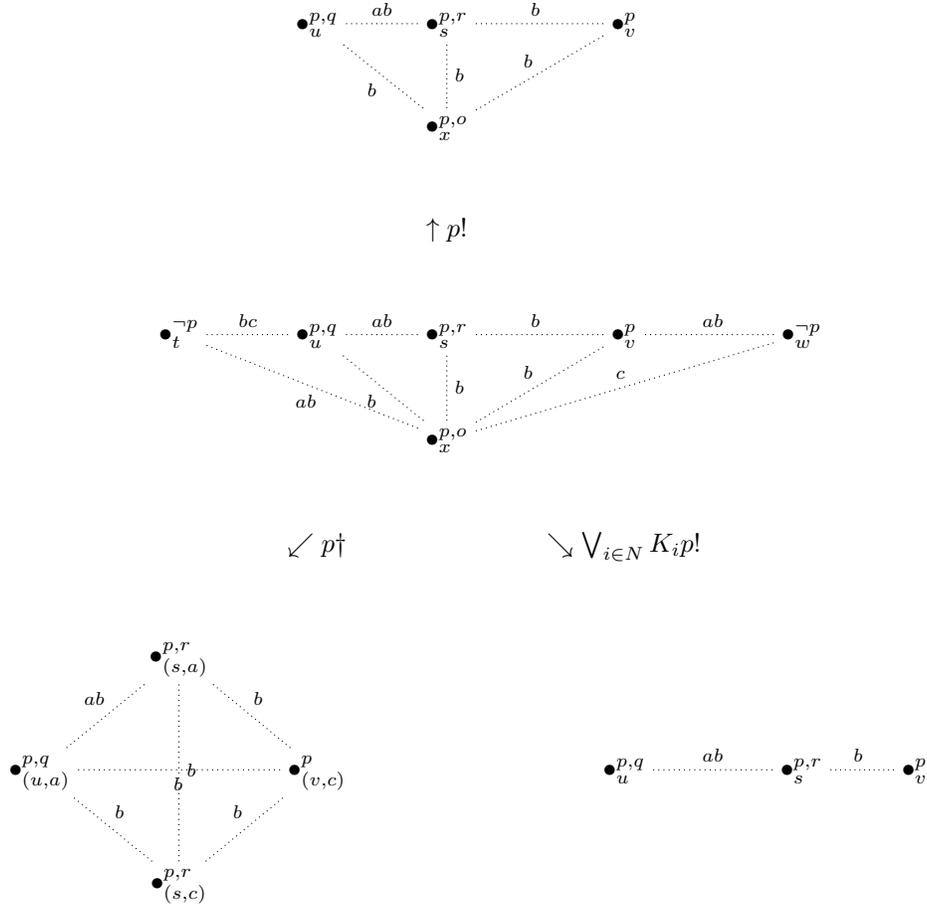

Note that in this example:
\begin{itemize}
    \item All the three updates are different.  
    \item $M,s \models [p\sa]\psi$ holds iff $\psi$ holds in both states ${(s,a)}$ and ${(s,c)}$ in the updated model to the bottom left in the figure.
    \item Thus we have for example that $M,s\models [p\sa]\hat{K}_b (r \wedge K_a \neg q)$: after the announcement $b$ considers it possible that the initial state was $s$ and that $a$ has learned that $q$ is false. This holds because $b$ considers it possible it was $c$ who announced $p$ (state $(s,c)$ in the updated model), in which case $a$ has learned that we were not in $u$. Note that this formula does not hold after the two other updates: $M,s \not\models [p!]\hat{K}_b (r \wedge K_a \neg q)$ and $M,s \not\models [\bigvee_{i \in N} K_i p!]\hat{K}_b (r \wedge K_a \neg q)$.
    \item The difference between the updates $\phi\sa$ and $\bigvee_{i \in N} K_i\phi!$ is related to the familiar ``de re"/``de dicto" distinction: in the latter case we update with ``in every state there exists someone who knows $\phi$", in the former with ``there exists someone who knows $\phi$ in every state". 
    \item We also have that $M,s \not\models [p\sa]K_a r$: if it was $a$ herself who made the pseudo-anonymous announcement (state $(s,a)$ in the updated model), she wouldn't learn that we were in $s$. 
\end{itemize}

Pseudo-anonymous announcements are not necessarily \emph{truly} anonymous, in all situations w.r.t. all agents, because depending on the background information of the agents they might reveal the identity of the announcer. 
Continuing with the example:
\begin{itemize}
    \item $a$'s pseudo-anonymous announcement of $p$ is truly anonymous w.r.t. $b$, in the epistemic state $M,s$. Indeed, $b$ considers it possible that it was $c$ who announced $p$. We see this explicitly in the updated model: $b$ considers two states possible where $a$ announced $p$ and two where $c$ did it.
    \item $a$' pseudo-anonymous announcement of $p$ is not truly anonymous w.r.t. $c$, in the epistemic state $M,s$. Intuitively, this is because $c$ knows that the only other agent besides herself who can announce $p$ is $a$. Formally, this can be seen from the updated model: in the state ${(s,a)}$, $c$ knows that it was $a$ who made the announcement. 
\end{itemize}

\subsection{Relationship to Action Models}
\label{sec:actionmodels}

As mentioned above, there is a close relationship with 
action model logic. In fact, the semantics of $\phi\sa$ given above is equivalent to a certain class of action models, \emph{pseudo-anonymous action models}, as we now show.

\begin{definition}
    The \emph{pseudo-anonymous action model} for $N$ agents and formula $\phi$ is the action model ${\sf M^N_\phi = (S,\sim,\pre)}$ where
    \begin{itemize}
        \item ${\sf S} = N$
        \item ${\sf a \sim_b c}$ iff $({\sf a} = {\sf c} \Leftrightarrow {\sf b} = {\sf c})$
        \item ${\sf \pre(a)} = K_a \phi$
    \end{itemize}
\end{definition}

Intuitively, the pointed pseudo-anonymous action model $({\sf M^N_\phi,a})$ corresponds to $a$ writing $\phi$ on the blackboard. The precondition is that $a$ knows $\phi$. Agent $c$ cannot discern between events $\sf a$ and $\sf b$ if and only if either $\sf a=b$, or ${\sf a} \neq {\sf c}$ and ${\sf b} \neq {\sf c}$. The three-agent pseudo-anonymous event is illustrated in Figure \ref{fig:3agent}.

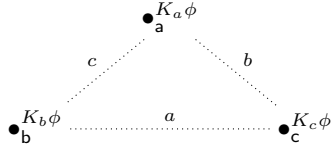
\begin{figure}
    \centering
\[\xymatrix{
    &\bullet^{K_a\phi}_{\sf a}\ar@{..}_c[dl]\ar@{..}^b[dr]\\
    \bullet^{K_b\phi}_{\sf b}\ar@{..}^a[rr]&&\bullet^{K_c\phi}_{\sf c}
}\]
    \caption{The three-agent pseudo-anonymous action model.}
    \label{fig:3agent}
\end{figure}

A pseudo-anonymous announcement now corresponds to the \emph{union} of events $({\sf M^N_\phi,a})$ for all agents $a$.  The proof of the following is straightforward by the definition of semantics. 
\begin{proposition}
\label{prop:anpalVSaml}
    For any pointed epistemic model $M,s$ and formulas $\phi$ and $\psi$,
    $M,s \models [\phi\sa]\psi \text{ iff } M,s \models \left[\sf \bigcup_{i \in N} (M^N_\phi,i)\right]\psi$.
\end{proposition}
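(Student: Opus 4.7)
The plan is to prove the biconditional by unfolding both sides using the semantic clauses from the \amllang\ and \salang\ definitions, and then to verify that the two resulting updated epistemic models are literally identical (not merely bisimilar). Once that identification is made, the equivalence follows immediately.

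First, I would unfold the right-hand side. By the clause for $[\pi \cup \rho]$ we have $M,s \models [\bigcup_{i\in N}(\mathsf{M}^N_\phi,\mathsf{i})]\psi$ iff for every $i \in N$, $M,s \models [(\mathsf{M}^N_\phi,\mathsf{i})]\psi$. Applying the action model clause and substituting $\mathsf{pre}(\mathsf{i}) = K_i\phi$, this becomes: for every $i \in N$, $M,s \models K_i\phi$ implies $(M \otimes \mathsf{M}^N_\phi, (s,\mathsf{i})) \models \psi$. Unfolding the left-hand side by the $[\phi\sa]$ clause yields the condition that for every $a \in N$, $M,s \models K_a\phi$ implies $M^{\phi\sa}, (s,a) \models \psi$. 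These two statements have the same quantifier structure and the same preconditions, so the remaining task is to show that $M \otimes \mathsf{M}^N_\phi$ and $M^{\phi\sa}$ coincide as epistemic models.

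The verification is a direct comparison of the three components. The domain of $M \otimes \mathsf{M}^N_\phi$ is $\{(s,\mathsf{i}) \mid s \in S,\ \mathsf{i} \in N,\ M,s \models \mathsf{pre}(\mathsf{i})\}$, which, using $\mathsf{pre}(\mathsf{i}) = K_i\phi$, is precisely the domain of $M^{\phi\sa}$. For accessibility, the product construction gives $(s,\mathsf{a}) \sim'_c (t,\mathsf{b})$ iff $s \sim_c t$ and $\mathsf{a} \sim_c \mathsf{b}$ in $\mathsf{M}^N_\phi$; by the definition of the pseudo-anonymous action model, the latter is exactly $\mathsf{a}=c \Leftrightarrow \mathsf{b}=c$, matching the definition of $\sim'_c$ in $M^{\phi\sa}$. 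The valuations also coincide, since both inherit $V(p)$ from the first coordinate. Hence the two updated models are the same triple, and so truth of $\psi$ at $(s,a)$ agrees on both sides, concluding the proof.

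There is no real obstacle here: the result is essentially a bookkeeping check that the definitions of $M^{\phi\sa}$ were tailored to reproduce the $\otimes$-update of $\mathsf{M}^N_\phi$. The only point worth stating carefully is that $\psi$ may itself contain dynamic operators, but since the two updated models are syntactically identical as epistemic models, no induction on $\psi$ is needed — any semantic clause evaluated at $(s,a)$ in one model is by definition evaluated at $(s,a)$ in the other.
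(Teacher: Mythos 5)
Your proof is correct and follows exactly the route the paper intends: the paper states the result is ``straightforward by the definition of semantics,'' and your unfolding of both sides plus the componentwise check that $M^{\phi\sa}$ and $M \otimes \mathsf{M}^N_\phi$ are literally the same model is precisely that straightforward argument, spelled out. Your closing remark that identity of the updated models (rather than mere bisimilarity) obviates any induction on $\psi$ is also right.
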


From this we get that \salang is no more expressive than \ellang via the known fact that $\amllang \approx \ellang$.

\subsection{Update expressivity} Given that the logic of pseudo-anonymous announcements is as expressive as epistemic logic, we show that pseudo-anonymous announcements are still unique dynamic operators. To this end, we claim that $\salang \not \approx^U \pallang$ and $\salang \not \approx^U \amllang$.

\begin{proposition}
\label{prop:updexp}
    \salang and \pallang are update incomparable. 
\end{proposition}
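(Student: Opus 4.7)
I shall exhibit witnesses for both $\salang \not\preccurlyeq^U \pallang$ and $\pallang \not\preccurlyeq^U \salang$.

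For $\salang \not\preccurlyeq^U \pallang$, the idea is to exploit the inherent nondeterminism of pseudo-anonymous announcements: any public announcement yields at most one pointed update, whereas $[\phi\sa]$ can yield several, one per possible announcer. My plan is to construct a pointed model $M,s$ in which at least two distinct agents $a,b$ both know $p$, yet the two pointed updates $(M^{p\sa},(s,a))$ and $(M^{p\sa},(s,b))$ are non-bisimilar; a matching PAL update would have to produce both up to bisimulation, which a singleton cannot do. A candidate is $N=\{a,b\}$, state set $\{s,u,v\}$, $V(p)=\{s,u\}$, $V(q)=\{s\}$, with $\sim_a$-classes $\{s,u\}$ and $\{v\}$, and $\sim_b$-classes $\{s\}$ and $\{u,v\}$. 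Since $K_bp$ fails at $u$ (because $v\not\models p$), the state $(u,b)$ is absent from $M^{p\sa}$, so $(s,a)$ inherits the $a$-successor $(u,a)\models \lnot q$ while $(s,b)$'s only $a$-successor is itself (satisfying $q$). The formula $\hat K_a\lnot q$ then separates $(s,a)$ from $(s,b)$.

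For $\pallang \not\preccurlyeq^U \salang$, the idea is that PAL can sever $\sim_a$-links that $[\phi\sa]$ cannot, since the announcer's knowledge propagates through $\sim_a$. I would take a two-state model $M$ with $V(p)=\{s\}$ and $\sim_a$ identifying $s$ and $t$. The update $[\lnot p!]$ at $M,t$ produces a single-state pointed model $(M^{\lnot p!},t)$ in which $\lnot p$ holds and no $p$-accessible world exists. Suppose some $\phi\in\salang$ matches it. If no agent knows $\phi$ at $t$, the execution set of $[\phi\sa]$ at $M,t$ is empty and cannot match. Otherwise $K_a\phi$ holds at $t$; by S5 (since $s\sim_a t$) it also holds at $s$, so $(s,a)\in M^{\phi\sa}$ and $(s,a)\sim'_a(t,a)$. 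Then $\hat K_a p$ is true at $(t,a)$ in $M^{\phi\sa}$---contradicting its falsity at $t$ in $M^{\lnot p!}$.

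The harder step will be Direction 1: in many natural constructions the label-swap $(x,a)\leftrightarrow(x,b)$ extends to a bisimulation of $M^{p\sa}$, making $(s,a)$ and $(s,b)$ bisimilar. The design must deliberately break this symmetry by arranging that some $\sim_a$-neighbour of $s$ fails $K_bp$, which asymmetrically prunes the $b$-side of the update in a way no label-swap can repair.
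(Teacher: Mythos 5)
Your Direction 1 ($\salang\not\preccurlyeq^U\pallang$) is correct, and it takes a genuinely different route from the paper's. You exploit nondeterminism: in your two-agent model the two outcomes $(M^{p\sa},(s,a))$ and $(M^{p\sa},(s,b))$ of the same update at the same point are non-bisimilar (indeed $(s,b)$ is isolated while $(s,a)$ has the $a$-successor $(u,a)\models\lnot q$, so $\hat{K}_a\lnot q$ separates them), and a public announcement produces at most one designated point, which cannot be bisimilar to both. The paper instead uses the three-agent model of Figure \ref{fig:update} and proves the stronger fact that one particular outcome, $M^{p\sa},(s,a)$, is not bisimilar to \emph{any} public-announcement update of $M$, via a formula no announcement-definable submodel can satisfy. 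Note what each buys: your argument is shorter but hinges on the convention that matching a nondeterministic update means matching \emph{all} of its designated outcomes --- in your model each outcome separately \emph{is} matchable by some announcement ($p!$ matches $(s,a)$, $(p\wedge q)!$ matches $(s,b)$) --- whereas the paper's argument is insensitive to how that convention is resolved. Since the paper's definition reads the $\sa$-update as a single (multi-pointed) object, your argument is acceptable, but you should state the convention explicitly.

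Direction 2 as written has a genuine gap. You specify only $\sim_a$ on $\{s,t\}$ and then slide from ``some agent knows $\phi$ at $t$'' to ``$K_a\phi$ holds at $t$''. If another agent's relation does not link $t$ to $s$, the argument fails: with $\sim_b$ the identity, take $\phi=\lnot p$; then $K_b\lnot p$ holds at $t$, the update $M^{\lnot p\sa}$ contains only $\lnot p$-states of the form $(t,x)$ with $x\neq a$, and every pointed outcome is bisimilar to the single-state model $M^{\lnot p!},t$ --- exactly the match you claim cannot exist. The fix is easy: make \emph{every} agent's relation total on $\{s,t\}$, so that any knower $x$ of $\phi$ at $t$ yields $(s,x)$ in the update with $(t,x)\sim'_x(s,x)$, hence $\hat{K}_x p$ holds at every outcome, contradicting the $p$-free singleton; this is then essentially the paper's argument, which uses a $p$-state whose every agent has a $\lnot p$-neighbour and the announcement $p!$. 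Finally, in both directions you silently rely on ``empty set of outcomes versus non-empty'' counting as a mismatch (when the $\sa$-announcement is not executable but the public announcement is, or conversely); say so explicitly, since the definition of update equivalence must be read as covering executability.
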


\begin{proof}
Consider model $M$ in Figure \ref{fig:updexp} and public announcement $p!$. The result of the public announcement of $p!$ in $M$ is the model $M^{p!}$ consisting of the single state $s$ satisfying $p$.  
    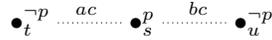
\begin{figure}[h!]
    \centering
\[\xymatrix{
\bullet^{\lnot p}_t\ar@{..}[r]^{ac}&\bullet^{p}_s \ar@{..}[r]^{bc} &\bullet^{\lnot p}_u
}\]
    \caption{Epistemic model $M$.}
    \label{fig:updexp}
\end{figure}

To see that there is no pseudo-anonymous announcement $\phi \sa$ that will result in a model bisimilar to $M^{p!}$, it is enough to notice that none of the agents in model $M$ has the ability to remove $\lnot p$-states due to precondition $K_i \phi$ for pseudo-anonymous announcements. 

To show that there are pseudo-anonymous announcements that cannot be captured by public announcements, we turn back to the model $M$ in the centre of Figure \ref{fig:update} and its update $M^{p \sa}$ depicted at bottom left of the same figure. 
Moreover, consider formula $\phi:= r \land \widehat{K}_a q \land \widehat{K}_b K_a (\lnot q \land r) \land K_b p$. This verbose formula is true in $M^{p\sa}, (s, a)$, and intuitively states that the current state satisfies $r$ (i.e. we are in an $s$-state), all reachable states satisfy $p$ (via conjunct $K_b p$ and the fact that $b$'s relation is universal), that there is an $a$-reachable $q$-state (conjunct $\widehat{K}_a q$), and that there is a state, where $a$ knows $\lnot q \land r$ (state $(s,c)$).

Now we will argue that no public announcement made in $M$ can result in an update satisfying $\phi$. First, since we have $K_b p$ as a conjunct in $\phi$, any update should preserve only $p$-states, i.e. a non-empty subset of $\{s,u,v,x\}$. State $s$ should be preserved as it is the only one satisfying $r$. Similarly, state $u$ should be in the updated model to satisfy $\widehat{K}_a q$. Now, the third conjunct of $\phi$ forces the existence of a state, where $a$ knows $\lnot q \land r$. The only state that satisfies $\lnot q \land r$ is $s$. However, since we have to preserve $u$, it would not hold in $s$ that $K_a (\lnot q \land r)$. It is easy to check that none of the remaining options of adding $x$ or $v$ will lead to the satisfaction of $K_a (\lnot q \land r)$ in $s$. Thus, there is no public announcement that can be made in $M$ such that the updated model exists and satisfies $\phi$. This implies that there is no public announcement update of $M$ that is bisimilar to $M^{p \sa}, (s, a)$.
\end{proof}

\begin{corollary}
\salang is strictly less update expressive than \amllang.
\end{corollary}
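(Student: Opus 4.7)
The plan is to establish the two ingredients of strict update expressivity separately, and to do so by reusing results already in hand.

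First, I would show $\salang \preccurlyeq^U \amllang$. This is essentially immediate from Proposition \ref{prop:anpalVSaml}: given any pseudo-anonymous announcement $[\phi\sa]$, the action model update $[\bigcup_{i\in N}({\sf M}^N_\phi,i)]$ produces a result that is not just bisimilar but (up to isomorphism) identical, on every pointed epistemic model $M,s$. So each pseudo-anonymous update has a direct witness in $\amllang$.

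Second, I would show $\amllang \not\preccurlyeq^U \salang$. For this I would reuse the first half of the proof of Proposition \ref{prop:updexp}. That argument exhibits a public announcement $[p!]$ whose effect on the model $M$ of Figure \ref{fig:updexp} is not bisimilar to the result of \emph{any} pseudo-anonymous announcement on $M$, essentially because pseudo-anonymous announcements cannot delete $\neg p$-states (their precondition is of the form $K_i\phi$, and in $M$ no agent knows anything nontrivial that would eliminate $t$ and $u$ while preserving $s$). Since $[p!]$ corresponds to the single-state action model ${\sf M}^p$ noted in the discussion following the equivalence $\pallang \approx \amllang$, this is already an $\amllang$-update with no $\salang$-equivalent, so $\amllang \not\preccurlyeq^U \salang$.

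Combining the two gives $\salang \prec^U \amllang$. I do not expect any genuine obstacle here: the only subtlety is bookkeeping about what exactly ``update'' means for the two languages — in particular, that the pointed action model used to simulate a public announcement is indeed a legal element of $\amllang$'s update language $\pi$, and that ``bisimilar'' in the definition of $\preccurlyeq^U$ is symmetric so that the one-state collapse of the $p!$-update really must be matched by some pseudo-anonymous update. Both points are routine given the definitions in Section \ref{sec:background}.
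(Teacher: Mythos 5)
Your proposal is correct and follows essentially the same route as the paper: $\salang \preccurlyeq^U \amllang$ from Proposition \ref{prop:anpalVSaml}, and $\amllang \not\preccurlyeq^U \salang$ from the first half of the proof of Proposition \ref{prop:updexp} together with the observation that a public announcement is a single-state action model. No gaps.
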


This follows from the facts that (1) each anonymous announcement can be modelled by the corresponding action model (Proposition \ref{prop:anpalVSaml}) and hence $\salang \preccurlyeq^U \amllang$; (2) each public announcement can be modelled by a single-state action model ($\pallang \not \preccurlyeq^U \salang$ from Proposition \ref{prop:updexp} and hence $\amllang \not \preccurlyeq^U \salang$).

\section{Intentional Anonymity}
\label{sec:intentions}

We now turn to \emph{intentional} anonymous announcements, where we assume that it is common knowledge that the announcement was intended to be anonymous. We want to capture that idea by strengthening the pre-condition on announcements, so that only ``safe" announcements are allowed. We will discuss in detail what ``safe" means, and it turns out to be quite subtle, but the intuition is that it is safe for an agent to make an announcement if somebody else could have made it. One of the subtleties is that the potential announcement by that ``somebody else" also should be safe, so the definition has a recursive (or even circular!) flavour.

\begin{figure}
    \centering
\[\xymatrix{\bullet^{p}_s\ar@{..}[r]^{b}&\bullet^{\neg p}&
\Rightarrow p\sa&
\bullet^{p}_{(s,\sf a)}\ar@{..}[d]^{b}\\
&&&\bullet^{p}_{(s,\sf c)}
}\]
    \caption{Three-agent epistemic model (top) and its update after a pseudo-anonymous announcement of $p$ (bottom). Accessibility for agents $a$ and $c$ is the identity relation. }
    \label{fig:trivial}
\end{figure}
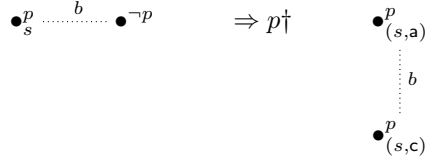

As a first simple example, consider the three-agent model in Figure \ref{fig:trivial}. We have that both $M,s \models K_a p$ and $M,s \models K_c p$ so both $a$ and $c$ can announce $p$, and we see in the updated model that in either case $b$ doesn't know who did it. However, since there are only two agents who could make the announcement, ``the other" agent learns who made it -- if $a$ makes the announcement, $c$ knows that $a$ did it, and vice versa. Thus, \emph{2-anonymity is not enough}. This is because we modeling anonymity \emph{within the system}, and not from the outside. So, we need at least \emph{3-anonymity}: an announcement is only safe if at least three agents can make it. In the model in the figure, no safe announcements of $p$ can be made.

Let's move on to the model $M$ in Figure \ref{fig:intentions1}, where also the updated model after the pseudo-anonymous announcement of $p$ is shown. In state $s$, all of 
$a$, $b$ and $c$ know $p$ and thus any one of them could have written it on the blackboard. The 
 updated model reflects that no-one except the announcer knows who did it. However, in state $s$:
\begin{itemize}
    \item $a$ considers it possible that we are in state $t$, where it would not be safe for her to announce $p$ since then $b$ would know that it was her.
    \item Thus it is not safe for $a$ to announce $p$ in s.
    \item $c$ knows that it is not safe for $a$ to announce $p$ in $s$, and thus it is not safe for $c$ either: if she announces $p$ in $s$, $b$ would know that it was her since $b$ knows that it is not safe for $a$ to announce $p$ in $s$.
    \item It is thus not safe for $b$ to announce $p$ in $s$ either
    \item It is not safe for $a$ or $b$ to announce $p$ in $t$ (only two agents)
    \item $p$ cannot be announced by anyone in $u$.
\end{itemize}
Thus, no safe announcements can be made in this model either.

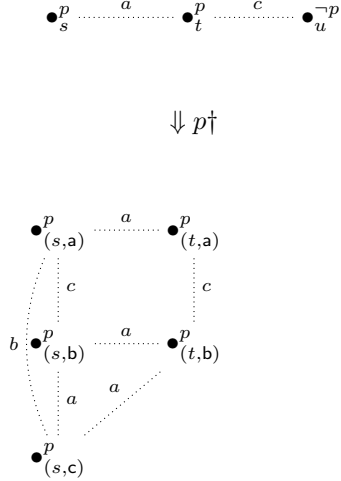
\begin{figure}
    \centering
\[\xymatrix{
\bullet^{p}_s\ar@{..}[r]^{a}&\bullet^{p}_t\ar@{..}[r]^{c}&\bullet^{\neg p}_u\\
&\Downarrow p\sa\\
\bullet^{p}_{(s,\sf a)}\ar@{..}@/_1pc/[dd]_{b}\ar@{..}[d]^{c}\ar@{..}[r]^{a}&\bullet^{p}_{(t,\sf a)}\ar@{..}[d]^{c}\\
\bullet^{p}_{(s,\sf b)}\ar@{..}[d]^{a}\ar@{..}[r]^{a}&\bullet^{p}_{(t,\sf b)}\\
\bullet^{p}_{(s,\sf c)}\ar@{..}[ur]^{a}
}\]
    \caption{Three-agent epistemic model, with update. Accessibility for agent $b$ is the identity relation.}
    \label{fig:intentions1}
\end{figure}

These examples show that it is not enough that three agents know $\phi$ to ensure that $\phi$ can be safely announced by any of them; to be safe those three agents not only need to know $\phi$ but also that $\phi$ is safe, i.e., that three agents know $\phi$ and that $\phi$ is safe \ldots and so on. This clearly has the flavour of \emph{common knowledge}. However, the existence of a group of three agents having common knowledge of $\phi$, \[M, s\models \bigvee_{a \neq b \neq c} C_{\{a,b,c\}}\phi\]
is sufficient but not necessarily for $\phi$ to be safe in $s$. A weaker condition would also be suffient: we don't need the three agents to know that \emph{the same} three agents safely can announce $\phi$. It might be, for example, that $a$ considers it possible that $a$, 
$b$ and $d$ safely can announce $\phi$.

In order to define that weaker condition, let us recall the fixpoint definition of common knowledge (see, e.g., \cite{Fagin:1995hc}). It is well known that $C_G\phi$ is the greatest fixpoint of \[E_G(\phi \wedge x)\]
w.r.t. a variable $x$ not occurring in $\phi$, or, more accurately, given an epistemic model $M$ the extension $\ext{C_G\phi}$ (the set of states where $C_G\phi$ is true) is the greatest fixpoint of the function \[f(S) = \ext{E_G(\phi \wedge x)}^M_{V[x = S]}\]
where $\ext{\psi}^M_{V[x = S]}$ is the extension of $\psi$ in the model $M$ where the valuation function has been changed so that the extension of the variable $x$ is $S$. Using a well known property of the fixed-point of monotonic functions, i.e., that the greatest fixed-point is equal to the union of all post-fixed points, we have that: \[\ext{C_G\phi} = \bigcup\left\{S : S \subseteq \ext{E_G(\phi \wedge x)}^M_{[x = S]}\right\}.\]
In terms of the modal $\mu$-calculus \cite{bradfield_modal_2007}, this can be written explicitly: \[C_G\phi \leftrightarrow \nu x.E_G(\phi \wedge x).\]

We now define a similar, weaker, notion of common knowledge in order to capture safety. Let us introduce a safety operator: $\safe \phi$ means that $\phi$ is safe in the current state. We let, where $N^3 = \{\{a,b,c\} : a,b,c \in N; a \neq b, a \neq c, b \neq c\}$ is the set of all groups of three different agents:
\[M,s \models \safe \phi \Leftrightarrow s \in \bigcup\left\{S : S \subseteq \ext{\bigvee_{G \in N^3} E_G(\phi \wedge x)}^M_{[x = S]}\right\}.\]
In other words, $\safe \phi$ is the greatest fixpoint of
$\bigvee_{G \in N^3} E_G(\phi \wedge x)$.
The following shows that this definition makes sense, w.r.t. safety. It is similar to the iterative definition of common knowledge.

\begin{lemma}\mbox{}
\label{lemma:iteration}
\[\begin{array}{ll}
    M,s \models \safe \phi \text{ iff}
        &M,s \models \phi \text{ and }\\
        &M,s \models \bigvee_{G_1 \in N^3} E_{G_1}\phi \text{ and}\\
        &M,s \models \bigvee_{G_1 \in N^3} E_{G_1} (\phi \wedge \bigvee_{G_2 \in N^3} E_{G_2}\phi)\text{ and}\\
        &M,s \models \bigvee_{G_1 \in N^3} E_{G_1} (\phi \wedge \bigvee_{G_2 \in N^3} E_{G_2}(\phi\wedge \bigvee_{G_3 \in N^3} E_{G_3}\phi)) \text{ and}\\
&    \cdots\\
        \end{array}\]
\end{lemma}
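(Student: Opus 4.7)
The plan is to exploit the greatest-fixed-point definition of $\safe$ directly. Let $f \colon \pow(S) \to \pow(S)$ be the monotone operator $f(X) = \ext{\bigvee_{G \in N^3} E_G(\phi \wedge x)}^M_{V[x = X]}$, so that $\ext{\safe\phi}^M$ is the union of all post-fixed points of $f$. Abbreviate the $n$-th formula on the right-hand side of the lemma as $\alpha_n$, with $\alpha_0 = \phi$ and $\alpha_{n+1} = \bigvee_{G \in N^3} E_{G}(\phi \wedge \alpha_n)$; the target equivalence then reads $M,s \models \safe\phi$ iff $M,s \models \alpha_n$ for every $n \geq 0$. A preparatory observation, which I prove by a straightforward induction using reflexivity of each $\sim_i$ (so $E_G \psi \to \psi$ is valid), is that $\alpha_{n+1} \to \alpha_n$ holds for every $n$: information at higher levels implies information at lower ones.

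For the forward direction, I fix a post-fixed point $S$ of $f$ with $s \in S$ and prove by induction on $n$ that every $s' \in S$ satisfies $\alpha_n$. The base case reads $\phi$ directly off $s' \in f(S)$ using reflexivity of $\sim_i$. For the step, from $s' \in f(S)$ I obtain a witnessing group $G$ such that every $\sim_i$-neighbour of $s'$ (for $i \in G$) lies in $S$ and satisfies $\phi$; the induction hypothesis then upgrades ``lies in $S$'' to ``satisfies $\alpha_n$,'' and taking $\bigvee_G$ yields $\alpha_{n+1}$ at $s'$.

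The backward direction is where finiteness of $N$ is used essentially. Assuming $M,s \models \alpha_n$ for all $n$, I set $S = \{s' \mid M,s' \models \alpha_n \text{ for every } n\}$ (so $s \in S$) and aim to show $S \subseteq f(S)$. For each $s' \in S$ and each $n$, satisfying $\alpha_{n+1}$ at $s'$ furnishes a witnessing group $G_n \in N^3$ such that every $\sim_i$-neighbour ($i \in G_n$) of $s'$ satisfies $\phi \wedge \alpha_n$. The apparent obstacle that $G_n$ may vary with $n$ is defeated by finiteness of $N^3$: by pigeonhole some single $G$ coincides with $G_n$ for arbitrarily large $n$. Combining this with the monotonicity $\alpha_{n+1} \to \alpha_n$ shows that for this $G$, every relevant $i$-neighbour $t$ of $s'$ satisfies $\phi \wedge \alpha_n$ for all $n$, hence $M,t \models \phi$ and $t \in S$; thus $s' \in f(S)$, as required.

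The main obstacle is precisely this pigeonhole/diagonal step: it is not technically deep, but it is the unique place where the finiteness of $N$ and the earlier monotonicity observation must combine -- the former to extract a single uniform group $G$ from the family $(G_n)_{n}$, and the latter to lift a property that a priori only holds for cofinal $n$ into one that holds uniformly for all $n$.
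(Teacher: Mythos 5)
Your proof is correct. The paper in fact states Lemma~\ref{lemma:iteration} without proof, so there is no ``official'' argument to compare against; what you give is a complete and standard Knaster--Tarski-style unfolding that fills that gap. The forward direction (induction on $n$ over an arbitrary post-fixed point containing $s$) is routine, and you correctly identify the one genuinely delicate point: in the backward direction the witnessing group $G_n \in N^3$ may vary with $n$, so showing that $S = \{s' \mid M,s' \models \alpha_n \text{ for all } n\}$ is a post-fixed point requires extracting a single uniform $G$. Your pigeonhole argument handles this properly, since $N$ is finite by assumption, and the promotion from ``cofinally many $n$'' to ``all $n$'' via the antitone chain $\alpha_{n+1} \rightarrow \alpha_n$ is exactly what is needed; note that this chain property itself relies on reflexivity of the $\sim_i$ (via $E_G\psi \rightarrow \psi$), which is available because the models are S5. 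This is the same mechanism the authors implicitly appeal to later when they assert equation~(\ref{eq:sem}), so your argument is consistent with, and justifies, the way the lemma is used in the soundness and completeness proofs.
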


Lemma \ref{lemma:iteration} also hints at another point of similarity between common knowledge and our new operator that the reader may find helpful. Recall that $C_G\phi$ is the reflexive transitive closure of $\sim_a$ for all $a\in G$. Intuitively, this means that when interpreted on a model, $C_G\phi$ holds if and only if all paths labelled by any $a\in G$ only lead to $\phi$-states. For safety, the difference is that now at each step of a path we choose a triple of agents  such that \textit{all} their relations lead to a $\phi$-state. The triple of agents can be different at each step.

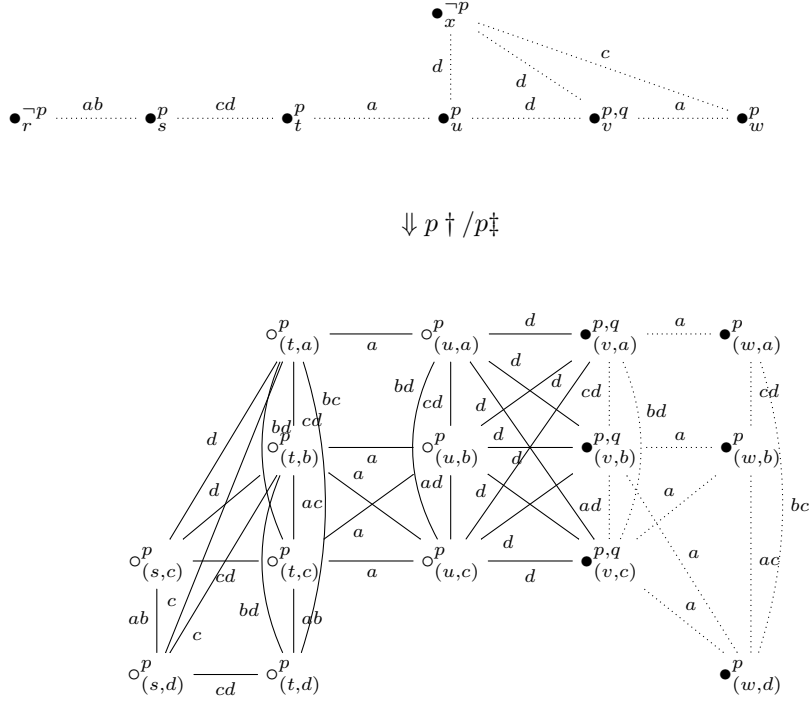
\begin{figure}[t!]
    \centering
\[\xymatrix{
&&&\bullet^{\neg p}_x\ar@{..}[dr]_{d}\\
\bullet^{\neg p}_r\ar@{..}[r]^{ab}&\bullet^{p}_s\ar@{..}[r]^{cd}&\bullet^{p}_t\ar@{..}[r]^{a}&\bullet^{p}_u\ar@{..}[r]^{d}\ar@{..}[u]^{d}&\bullet^{p,q}_v\ar@{..}[r]^{a}&\bullet^{p}_w\ar@{..}[ull]_{c}
\\
&&&\Downarrow p\sa/p\sai\\
&&\bbullet^{p}_{(t,a)}\ar@{-}[r]_{a}\ar@{-}[d]^(.75){cd}\ar@{-}@/_1pc/[dd]^(.4){bd}\ar@{-}@/^1pc/[ddd]^(.2){bc}&\bbullet^{p}_{(u,a)}\ar@{-}[r]^{d}\ar@{-}[rd]^(.35){d}\ar@{-}[rdd]_{d}\ar@{-}[d]_(.6){cd}\ar@{-}@/_1.2pc/[dd]_(.25){bd}&\bullet^{p,q}_{(v,a)}\ar@{..}[r]^{a}\ar@{..}[d]_{cd}\ar@{..}@/^1pc/[dd]^(.35){bd}&\bullet^{p}_{(w,a)}\ar@{..}[d]^{cd}\ar@{..}@/^1pc/[ddd]^{bc}\\
&&\bbullet^{p}_{(t,b)}\ar@{-}[r]_{a}\ar@{-}[rd]^(.35){a}\ar@{-}[d]^{ac}\ar@{-}@/_1pc/[dd]_(.7){bd}&\bbullet^{p}_{(u,b)}\ar@{-}[ru]^(.25){d}\ar@{-}[r]^(.3){d}\ar@{-}[rd]_(.25){d}\ar@{-}[d]_(.3){ad}&\bullet^{p,q}_{(v,b)}\ar@{..}[d]_{ad}\ar@{..}[r]^{a}\ar@{..}[ddr]^{a}&\bullet^{p}_{(w,b)}\ar@{..}[dd]^{ac}\\
&\bbullet^{p}_{(s,c)}\ar@{-}[d]_{ab}\ar@{-}[ruu]^{d}\ar@{-}[ru]^{d}\ar@{-}[r]_{cd}&\bbullet^{p}_{(t,c)}\ar@{-}[r]_{a}\ar@{-}[ru]_(.35){a}\ar@{-}[d]^{ab}&\bbullet^{p}_{(u,c)}\ar@{-}[ruu]^(.75){d}\ar@{-}[ru]_(.3){d}\ar@{-}[r]_{d}&\bullet^{p,q}_{(v,c)}\ar@{..}[dr]^{a}\ar@{..}[ur]^{a}\\
&\bbullet^{p}_{(s,d)}\ar@{-}[ruuu]^(.20){c}\ar@{-}[ruu]_(.20){c}\ar@{-}[r]_{cd}&\bbullet^{p}_{(t,d)}&&&\bullet^{p}_{(w,d)}
}\]
    \caption{Four-agent epistemic model $M$, with updated model $M^{p\sai}$ (black nodes and dotted edges) as well as $M^{p\sa}$ (black and white nodes, solid and dotted edges).}    \label{fig:intentions2}
\end{figure}

While $\safe\phi$ means that $\phi$ can safely be announced, $K_a\safe$ means that $\phi$ can safely be announced \emph{by $a$}. It is easy to see that:
\[\begin{array}{ll}
    M,s \models K_a\safe \phi \text{ iff}
        &M,s \models \phi \text{ and }\\
        &M,s \models \bigvee_{G_1 \in N^3, a \in G_1} E_{G_1}\phi \text{ and}\\
        &M,s \models \bigvee_{G_1 \in N^3, a \in G_1} E_{G_1} (\phi \wedge \bigvee_{G_2 \in N^3} E_{G_2}\phi)\text{ and}
        \cdots
        \end{array}\]

Finally we can introduce an operator for intentional anonymous announcements, $[\phi\sai]$. The semantics is defined as follows. The model update is defined by strengthening the pre-condition in the definition of the updated model from the previous section from $K_a\phi$ to $K_a \safe \phi$ -- $a$ knows that $\phi$ is safe.

\begin{definition}
    The update of epistemic model $M = (S,\sim,V)$ by the safe pseudo-anonymous announcement of $\phi$ is the epistemic model $M^{\phi\sai} = (S',\sim',V')$ where:
    \begin{itemize}
        \item $S' = \{(s,a) : s \in S$, $M,s \models K_a \safe\phi$\}
        \item $(s,a) \sim'_c (t,b)$ iff $s \sim_c b$ and $a = c$ iff $b = c$ 
        \item $V'(s,a) = V(s)$
    \end{itemize}
\end{definition}
We then let:
\[
M,s \models [\phi\sai]\psi \Leftrightarrow \forall a \in N, \left(M,s \models K_a \safe\phi \Rightarrow M^{\phi\sai},(s,a) \models \psi\right).\]

Thus, we have introduced two new operators: $\safe$ and $[\phi\sai]$. Do we want both in the formal language? Let's consider 
languages for all three combinations, which turns out to be useful later\footnote{\sailang\ is well defined without $\safe$ in the syntax, even though $\safe$ is used in the semantic condition for $[\phi\sai]$. That semantic condition could of course be written without mentioning $\safe$, replacing it with its meaning.}. 
\[
\begin{array}{ll}
\sailang& \phi ::= p \mid \neg \phi \mid \phi \wedge \phi \mid K_i \phi \mid [\phi\sai] \phi\\
\safelang& \phi ::= p \mid \neg \phi \mid \phi \wedge \phi \mid K_i \phi \mid \safe \phi\\
\saifelang& \phi ::= p \mid \neg \phi \mid \phi \wedge \phi \mid K_i \phi \mid [\phi\sai] \phi \mid \safe \phi
\end{array}\]

Let's look at some examples. In the model in Figure \ref{fig:intentions2}:
\begin{itemize}
    \item $M,s \models \neg\safe p$. $p$ can't be safely announced in $s$: only two agents know $p$.
    \item $M,t \models \neg\safe p$. While all four agents know $p$ in $t$, they don't all know that three agents know $p$: $c$ and $d$ consider it possible that only two agents know $p$. In other words, even though they know $p$ they don't know that it can be safely announced, because they consider it possible that we are in $s$ in which if can't be.
    \item $M,u \models \neg\safe p$. Three agents, $a$, $b$ and $c$, know $p$. They also know that at least three agents know $p$. But they still don't know that $p$ is safe: they don't know that at least three agents know that at least three agents know $p$.
    \item $M,v \models \safe p$ and $M,w \models \safe p$. $p$ can be safely announced in $v$ and in $w$. In fact, $\{w,v\}$ is the greatest fixed-point of $\bigvee_{G \in N^3} E_G(\phi \wedge x)$. 
    \item $M,v \models K_a\safe p$. $p$ can be safely announced by $a$ in $v$.
    \item Even though $a$ knows \emph{that} $p$ is safe in $v$, she doesn't know \emph{why}. She considers it possible that we indeed are in $v$ in which it would be safe for $a$, $b$ and $c$ to announce $p$, or in $w$ in which it would be safe for $a$, $b$ and $d$ to announce $p$, but she doesn't know which of the two states we are in. But she knows that no matter which state we are in, it is safe to announce $p$ -- but for different reasons. We can say that she knows \emph{de dicto} that $p$ is safe, but not \emph{de re}. 
    \item $M,v \models [p\sai]K_cK_d q$. After an intentionally anonymous announcement of $p$ in $w$, $c$ knows that $d$ has learned that $q$ is true.
    \item Safe anonymous announcements are \emph{more informative} than general pseudo-anonymous announcements. 
    \item In every state in the updated model, \emph{only the announcer knows who the announcer was}.
\end{itemize}

The latter point is no coincidence. The following shows that safe announcements are indeed safe:

\begin{lemma}
    \label{lemma:ck}
    In any update by a safe pseudo-anonymous announcement, it is common knowledge that no-one except the announcer knows who the announcer is (more technically: in any state $(s,a)$, for any agent $i \neq a$ there is a state $(s',a')$ such that $(s,a) \sim_i (s',a')$ and $a \neq a'$).
\end{lemma}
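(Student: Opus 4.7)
The plan is to unpack the statement into a pointwise property on the updated model and then use the fixpoint characterization of $\safe\phi$ to exhibit the required alternative announcer within the same state. Fix an arbitrary $(s,a) \in S'$ of $M^{\phi\sai}$ and $i \in N$ with $i \neq a$; it suffices to produce $(s',a') \in S'$ with $a' \neq a$ and $(s,a) \sim'_i (s',a')$, since establishing this at every state of the updated model immediately yields common knowledge of the property (and hence the informal reading that no-one but the announcer knows who the announcer is). Because $i \neq a$, the defining clause for $\sim'_i$ requires $s \sim_i s'$ together with $a' \neq i$.

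The key observation is that a single unfolding of the fixpoint equation $\safe\phi \leftrightarrow \bigvee_{G \in N^3} E_G(\phi \wedge \safe\phi)$ already delivers a whole trio of candidate announcers at $s$. From $(s,a) \in S'$ we have $M,s \models K_a\safe\phi$, so in particular $M,s \models \safe\phi$ by factivity of $K_a$. Applying the fixpoint equation, there exists $G \in N^3$ with $M,s \models E_G(\phi \wedge \safe\phi)$, meaning that for every $j \in G$, $M,s \models K_j(\phi \wedge \safe\phi)$ and hence $M,s \models K_j\safe\phi$. By the precondition in the definition of the update, this yields $(s,j) \in S'$ for every $j \in G$.

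Now I exploit the specific cardinality built into $N^3$: since $|G| = 3$ and $|\{a,i\}| \leq 2$, the set $G \setminus \{a,i\}$ is nonempty. Choose any $a' \in G \setminus \{a,i\}$; then $(s,a') \in S'$, $a' \neq a$, and $(s,a) \sim'_i (s,a')$ because $s \sim_i s$ by reflexivity while neither $a$ nor $a'$ equals $i$ (so the biconditional in the definition of $\sim'_i$ is vacuously satisfied on both sides). This yields the desired alternative state and completes the argument at $(s,a)$; since $(s,a)$ was arbitrary, common knowledge follows.

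I do not anticipate a real obstacle beyond seeing where to use the fixpoint unfolding. The only substantive point is that the parameter $3$ in $N^3$ is exactly what is needed: we must dodge both the actual announcer $a$ and the observer $i$, which requires a witness triple with at least one spare agent, and three is the minimum cardinality that guarantees this. This also neatly explains, at the level of the proof, why the definition of $\safe$ used $3$-anonymity rather than $2$-anonymity.
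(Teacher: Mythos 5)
Correct, and essentially the same argument as the paper's: both use factivity to get $\safe\phi$ at $s$, unfold the fixpoint (Mix) to obtain a triple of agents who all know $\safe\phi$ at $s$ and hence are available as announcers in the updated model, and then pick a member of that triple distinct from both the announcer $a$ and the observer $i$ by a cardinality count. Your write-up is merely a bit more explicit than the paper's about the fixpoint unfolding and about why $|G|=3$ suffices to dodge two agents; there is no substantive difference.
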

\begin{proof}
    Let $M$ be an epistemic model. Let $(s,a)$ be a state in $M^{\phi\sai}$ and let $d \neq a$. By definition, $M,s \models K_a\safe\phi$, so there are $b \neq c$, $b \neq a$ and $c \neq a$, such that $M,s \models K_b \safe\phi \wedge K_c \safe\phi$. That means that $(s,b)$ and $(s,c)$ are states in the updated model. We must have that either $b \neq d$ or $c \neq d$ (or both); assume the former. Since all of $a$, $b$ and $d$ are different, we have that $(s,a) \sim'_d (s,b)$.
\end{proof}

Thus, $K_a\safe\phi$ is a \emph{sufficient} condition for a safe anonymous announcement: after it is announced, no one will know who the announcer is. It is also \emph{necessary}, in the sense that $a$ must know that at least two other agents know $\phi$ (if there is only one ``the other" agent will know who made the announcement), and must know that those two agents know that at least two other agents know, and so on -- $K_a\safe\phi$. That justifies the definition of the model update: $K_a\safe\phi$ is the information that is revealed.

The reader might have observed that the updated model $M^{p\sai}$ in Figure \ref{fig:intentions2} is bisimilar to the submodel of $M$ it is projected on, i.e., the submodel consisting of states $v$ and $w$ -- exactly the submodel where $K_a\safe p$ holds for some $a$ in other words. That is no coincidence: it is in fact always the case\footnote{Unlike for $M^{\phi\sa}$, see Figure \ref{fig:update}.}. Indeed, we in fact have the following.

\begin{lemma}
\label{lemma:anpal-pal}
    For any $\phi$, $\psi$, $M$, and $s$, 
    \[M,s\models [\phi\sai]\psi \text{ iff } M,s\models \safe\phi \Rightarrow M^{\safe\phi!},s \models \psi
    \]
\end{lemma}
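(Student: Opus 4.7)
The plan is to reduce the lemma to a single bisimulation claim and then verify it. Specifically, I would show that for every $(s,a) \in S^{\phi\sai}$, the pointed models $M^{\phi\sai},(s,a)$ and $M^{\safe\phi!},s$ are bisimilar. Once this is in hand, the lemma follows from bisimulation invariance of \saifelang (where $\safe\phi$ is preserved as a greatest fixed point of a modal formula in a variable $x$, a standard $\mu$-calculus fact, and $[\phi\sai]$ is handled by routine induction on formula complexity). For the forward direction, if $M,s \models \safe\phi$, then by the fixpoint unfolding there is at least one (indeed some $G \in N^3$ worth of) agent $a$ with $M,s \models K_a\safe\phi$; applying the hypothesis gives $M^{\phi\sai},(s,a)\models\psi$ and bisimulation transfers this to $M^{\safe\phi!},s$. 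For the backward direction, any $(s,a) \in S^{\phi\sai}$ has $M,s \models \safe\phi$ by reflexivity, so the hypothesis yields $M^{\safe\phi!},s\models\psi$, and bisimulation transfers this back to $M^{\phi\sai},(s,a)$.

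The bisimulation is the obvious projection: $Z = \{((s,a),s) : (s,a) \in S^{\phi\sai}\}$. The atoms clause is immediate since $V^{\phi\sai}(s,a)=V(s)=V^{\safe\phi!}(s)$. The forth clause is also easy: if $(s,a)\sim^{\phi\sai}_c (t,b)$ then $s \sim_c t$, and since both endpoints are in $S^{\phi\sai}$ their first coordinates lie in $\llbracket\safe\phi\rrbracket_M = S^{\safe\phi!}$ (using reflexivity of $\sim_a$), so $s \sim^{\safe\phi!}_c t$ with $(t,b) Z t$.

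The main obstacle is the back clause, which splits into two cases. Suppose $s \sim^{\safe\phi!}_c t$; I must exhibit some $b$ with $(s,a)\sim^{\phi\sai}_c (t,b)$, meaning $M,t \models K_b\safe\phi$ and ($a=c \Leftrightarrow b=c$). If $a=c$, then $K_c\safe\phi$ holds at $s$; since $\sim_c$ is an equivalence and $s \sim_c t$, the two states share a $c$-equivalence class, so $K_c\safe\phi$ also holds at $t$, and $b := c$ works. If $a \neq c$, I need some $b \neq c$ with $M,t \models K_b\safe\phi$. Here I use the fact that $t \in S^{\safe\phi!}$ gives $M,t\models\safe\phi$, hence by Lemma \ref{lemma:iteration} (or direct fixpoint unfolding) there is a group $G \in N^3$ of three agents with $M,t\models K_i\safe\phi$ for each $i \in G$; at least one such $i$ is distinct from $c$, and I pick $b := i$.

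Having verified that $Z$ is a bisimulation, the two pointed models agree on all $\saifelang$-formulas, and the two reduction steps outlined above deliver both directions of the biconditional.
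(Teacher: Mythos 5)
Your proposal is correct and follows essentially the same route as the paper: the same projection bisimulation $Z$ between $M^{\phi\sai}$ and $M^{\safe\phi!}$, the same case split on $a=c$ versus $a\neq c$ in the back clause (using the Mix/fixpoint unfolding to find a witness $b\neq c$ with $M,t\models K_b\safe\phi$), and the same final reduction via the fixed-point property in one direction and reflexivity of $\sim_a$ in the other. Your treatment is if anything slightly more explicit than the paper's (e.g., checking that the first coordinates land in $S^{\safe\phi!}$ in the forth clause), so there is nothing to add.
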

\begin{proof}
    We first show that $M^{\phi\sai}$ and $M^{\safe\phi!}$ are bisimilar. Let $Z$ be such that $(s,a)Zs'$ iff $s = s'$
     and let $(s,a)Zs$. \emph{Atoms} is straightforward. \emph{Forth} is as well: if $(s,a) \sim_c (t,b)$, then $s \sim_c t$ by definition.  For \emph{Back}, let $s\sim_c t$. First, consider the case that $c \neq a$. Since $t$ is in $M^{\safe\phi!}$, $M,t \models \safe\phi$. Thus, there are $b,c,d \in N$ (all different) such that $M,t \models K_b \safe\phi \wedge K_c \safe \phi \wedge K_d \safe\phi$. At least one of $b,c,d$ must be different from both $a$ and $c$. Say, $b$. Thus $(s,a) \sim_c (t,b)$. Second, consider the case that $c = a$. Since $M,s \models K_c\safe$ and $s \sim_c t$, we also have that $M,t \models K_c\safe$. Thus $(t,c)$ is in $M^{\phi\sai}$, and we have that $(s,a) \sim_c (t,c)$. 

    Note that satisfaction of $\safe$ is invariant under bisimulation (this follows, e.g., from Lemma \ref{lemma:iteration}).
    Then we have that $M,s \models [\phi\sai]\psi$ iff for all $a$ $M,s \models K_a\safe\phi$ implies $M^{\phi\sai},(s,a) \models \psi$ iff for all $a$ $M,s \models K_a\safe\phi$ implies $M^{\safe\phi!},s \models \psi$. We argue that this is equivalent to $M,s \models \safe\phi$ implies $M^{\safe\phi!},s \models \psi$. For the implication from left to right, if $M,s \models \safe \phi$ then $M,s\models K_a\safe\phi$ for some $a$ by the fixed-point property of $\safe$ (actually that holds for three different $a \in N$). From $M,s \models K_a\safe\phi \Rightarrow M^{\safe\phi!},s \models \psi$ for all $a$, we get that $M^{\safe\phi!},s \models \psi$. For the other direction, let $a \in N$. If $M,s \models K_a\safe\phi$, then $M,s\models \safe\phi$ by reflexivity, so $M^{\safe\phi!},s \models \psi$. 
\end{proof}

As an immediate corollary of Lemma \ref{lemma:anpal-pal} we have that  if we allow public announcement operators in the language, then the following holds: \[\models [\phi\sai]\psi \leftrightarrow [\safe\phi!]\psi.
\]

So, safe anonymous announcements are exactly public announcements of safety. Thus, we get a ``simpler", alternative, logically equivalent semantics. The existence of the original semantics and the fact that it corresponds exactly to this alternative semantics is of course still crucial: it is what ensures that safety is safe (Lemma \ref{lemma:ck})\footnote{We used invariance under standard bisimulation as a technical tool to establish the correspondence. Of course, when we say that ``only the announcer knows who the announcer is" and so on, we implicitly mean agents can potentially discern between states $(s,a)$ and $(s,b)$, even though, as shown above, they are bisimilar. An extended notion of bisimulation could be introduced to take into account this assumption, but it is in any case not picked up by the logical language, and here we only need standard bisimulation as a technical tool. It is of course also possible to think about ways this distinction \emph{could} be picked up by the language, e.g., by introducing special atoms, but that is complicated by, e.g, the possibility of iterated announcements leading to states like $((s,a),b)$ and so on.}.

It also means that PAL extended with $\safe$ can express safe anonymous announcements. However, as we shall see when we now move on to comparing the expressive power of the three languages introduced above, even less is needed.

\subsection{Expressive power}

Consider the three languages $\saifelang, \safelang$ and $\sailang$. The first is similar to AMLC -- it is epistemic logic extended with (restricted) action model operators and the common-knowledge-like operator $\safe$. In the same sense, $\safelang$ is similar to ELC. $\sailang$ is in-between, it would correspond to AML with common knowledge allowed in pre-conditions. AMLC is strictly more expressive than ELC \cite[Chapter 8]{vDvdHK2007}; in particular expressions of the form $[M,s]C_G \phi$ cannot 
always be reduced to an ELC formula. Nevertheless, we shall now show that, perhaps contrary to intuition,  \emph{all three languages $\saifelang, \safelang$ and $\sailang$ are actually all equally expressive}.

Let us start with the ``intermediate" language $\sailang$. It is easy to see that $\phi\sai$ can be expressed by action models like in Section \ref{sec:actionmodels}, except that the preconditions now are $K_a\safe\phi$ instead of $K_a\phi$. We saw that $\salang$ is reducible to epistemic logic, in the sense that for every formula in $\salang$ there is an equivalent formula in epistemic logic, because we can translate the $\salang$ formula to an AML formula which we know we can translate to an epistemic logic formula by using reduction axioms for AML. The same argument cannot be used directly for $\sailang$, because the preconditions, that the reduction axioms use, include the $\safe$ operator, which is not a part of the language. And if it was, as in the $\saifelang$ language, we would need a reduction axiom for $\safe$ as well.  The corresponding result does, however, still hold.

\begin{lemma}
    \label{lemma:exp1}
    For any $\phi \in \sailang$, there is a $\phi' \in \safelang$ such that for any $M,s$, it holds that $M,s \models \phi$ iff $M,s \models \phi'$.
\end{lemma}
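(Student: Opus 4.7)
The plan is to define a translation $t: \sailang \to \safelang$ by induction on a well-founded complexity measure that lexicographically combines the number of $[\cdot\sai]$ operators with standard structural complexity, in the style of van Ditmarsch, van der Hoek, and Kooi's proof that $\amllang \approx \ellang$. Boolean, atomic, and $K_i$ cases recurse directly. The central case $[\phi\sai]\psi$ is handled by first inductively translating $\phi$ and $\psi$ to $\safelang$-formulas $\phi^*$ and $\psi^*$, then invoking Lemma \ref{lemma:anpal-pal} to rewrite the formula as $\safe\phi^* \to [\safe\phi^*!]\psi^*$, and finally eliminating the public-announcement operator by pushing it inwards.

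Most of the inward push is routine: the familiar PAL reduction axioms
\[
[\alpha!]p \equiv \alpha \to p,\quad [\alpha!]\neg\chi \equiv \alpha \to \neg [\alpha!]\chi,\quad [\alpha!](\chi_1 \wedge \chi_2) \equiv [\alpha!]\chi_1 \wedge [\alpha!]\chi_2,\quad [\alpha!]K_i\chi \equiv \alpha \to K_i[\alpha!]\chi
\]
all stay within $\safelang$ when $\alpha \in \safelang$, and each strictly reduces the complexity measure on the right-hand side. So the only thing that prevents a mechanical reduction to $\safelang$ is the absence of a reduction axiom for $[\alpha!]$ applied to the $\safe$ operator.

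The main obstacle is therefore the case $[\alpha!]\safe\chi$, which must be handled because the inductive translation of a nested $[\phi'\sai]$ inside $\psi$ introduces $\safe$ into the body of the outer announcement (even though $\safe$ itself does not occur in the original $\sailang$-formula). I would derive the required reduction axiom directly from the greatest-fixed-point characterisation of $\safe$: $M^{\alpha!},s \models \safe\chi$ iff $s$ lies in the greatest $S \subseteq \llbracket\alpha\rrbracket_M$ whose members all satisfy $[\alpha!]\chi$ in $M$ and each admit three agents all of whose $\alpha$-restricted neighbours lie in $S$. The task is then to exhibit a $\safelang$-formula built from $\safe$, $\alpha$, and the inductive translation of $[\alpha!]\chi$ that defines the same greatest fixed-point in $M$; the natural candidate is an $\alpha$-guarded variant of $\safe$, obtained by pushing $\alpha$ inside the implications in the fixed-point equation for $\safe$. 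Matching the two greatest fixed-points and verifying bisimulation-invariance is the genuinely delicate step; once it is in place, the outer induction goes through mechanically and yields the desired $\phi' \in \safelang$.
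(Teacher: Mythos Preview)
Your approach differs from the paper's in a way that manufactures an obstacle the paper's argument is specifically designed to avoid.

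You work \emph{inside-out}: first translate the immediate subformulas $\phi,\psi$ of $[\phi\sai]\psi$ into $\safelang$-formulas $\phi^*,\psi^*$, then (via Lemma~\ref{lemma:anpal-pal}) rewrite as $[\safe\phi^*!]\psi^*$ and push the announcement inward. But the recursive translation of $\psi$ has already planted $\safe$-operators inside $\psi^*$, so pushing $[\safe\phi^*!]$ through $\psi^*$ forces you to reduce subformulas of the shape $[\alpha!]\safe\chi$. You correctly flag this as the crux and propose an ``$\alpha$-guarded variant of $\safe$''---in effect a \emph{relativised} safety operator---but you do not construct it or show it is $\safelang$-definable. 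By direct analogy with common knowledge (relativised $C_G$ is not definable in ELC, which is precisely why PAL with common knowledge is strictly more expressive than ELC), there is no reason to expect such a reduction to exist. The step you leave open is therefore not a detail to be filled in later; it is the entire content of the proof, and your sketch gives no evidence it can be completed.

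The paper works \emph{outside-in} and never confronts this problem. It does not use Lemma~\ref{lemma:anpal-pal} at all: it translates the whole $\sailang$-formula in one pass into an action-model formula lying in a fragment $\mathrm{AML}\safe^-$ in which $\safe$ occurs \emph{only in preconditions} $K_a\safe\theta$ and never in the body of any $[\mathsf{M,s}]$-operator. (This is possible exactly because $\safe$ is not in the syntax of $\sailang$: bodies come from $\sailang$ and are $\safe$-free.) It then eliminates action-model operators starting from an \emph{outermost} occurrence. Each reduction step either pushes the operator further into a $\safe$-free body, or discharges a precondition $K_a\safe\theta$ at a position that---because the chosen modality was outermost---lies outside every remaining $[\cdot]$. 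The invariant ``$\safe$ never in the scope of $[\cdot]$'' is thus claimed to be maintained throughout, so a reduction axiom for $[\cdot]\safe\chi$ is never needed. That ordering trick is the idea your proposal is missing; by translating subformulas first you throw it away and are left trying to prove something strictly harder.
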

\begin{proof}
    Consider the language $AML\safe^-$, which is the language of AML extended with $\safe$, but with the restriction that $\safe$ is not allowed in the scope of a $[M,s]$ modality. It is easy to see that the reduction axioms for AML still are valid for this extended language. Take the formula $\phi$, let $\psi$ be the corresponding AML formula.  Now we recursively reduce the number of occurrences of $[\cdot]$ modalities in $\psi$ by starting with an \emph{outermost} one, i.e., one that is not in the scope of any other. Use the corresponding reduction axiom, and repeat until all $[\cdot]$ modalities are gone. Every time we use a reduction axiom we might introduce a new subformula $pre(s) = K_a\safe\psi'$, but never in the scope of a $[\cdot]$ modality since we started with the outermost modality. Thus, every time we use a reduction axiom the result is in $AML\safe^-$, and when we are done the result is in pure epistemic logic extended with $\safe$ -- which is $\safelang$.
\end{proof}

Thus, $\sailang$ can be ``reduced" to $\safelang$ (even though the latter is not a sub-language). 

The second ``surprise" is that the $\safe$ operator can actually be expressed by $[\phi\sai]$.

\begin{lemma}
    \label{lemma:exp2}
    For any $M,s$ and any $\phi$ (in any of the languages), $M,s \models \safe \phi$ iff $M,s \models \neg [\phi\sai]\bot$.
\end{lemma}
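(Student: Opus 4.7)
The plan is to unfold the definition of $[\phi\sai]\bot$ and then use the fixed-point characterization of $\safe$ to connect the two sides.

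First, I would unpack the right-hand side. By the semantic clause for $[\phi\sai]$, we have $M,s \models [\phi\sai]\bot$ iff for all $a \in N$, $M,s \models K_a\safe\phi$ implies $M^{\phi\sai},(s,a) \models \bot$. Since $\bot$ is never satisfied, this is equivalent to $M,s \not\models K_a\safe\phi$ for every $a \in N$. Taking the negation, $M,s \models \neg[\phi\sai]\bot$ reduces to: there exists some $a \in N$ with $M,s \models K_a\safe\phi$.

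So the lemma reduces to showing: $M,s \models \safe\phi$ iff $M,s \models K_a\safe\phi$ for some $a \in N$. For the backward direction, suppose $M,s \models K_a\safe\phi$ for some $a$. Since $\sim_a$ is reflexive (every relation in our epistemic models is an equivalence relation), $s\sim_a s$, so $M,s \models \safe\phi$ follows immediately. For the forward direction, suppose $M,s\models\safe\phi$. By the fixed-point characterization, $s$ lies in some post-fixed point $S \subseteq \ext{\bigvee_{G \in N^3} E_G(\phi \wedge x)}^M_{[x = S]}$. Thus there is a group $G = \{a,b,c\} \in N^3$ such that $M,s \models E_G(\phi \wedge x)$ with $x$ interpreted as $S \subseteq \ext{\safe\phi}$. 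Monotonicity of $E_G$ then gives $M,s \models E_G(\phi \wedge \safe\phi)$, so in particular $M,s \models K_a\safe\phi$ (and indeed for $b$ and $c$ as well, matching the parenthetical in the proof of Lemma \ref{lemma:anpal-pal}).

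The two directions combined give the lemma. The only subtle step is the forward direction, which requires a careful reading of the $\mu$-calculus style fixed-point definition of $\safe$; the rest is definition-chasing. I do not expect any real obstacle here, since the fixed-point machinery was already used in the same way in the proof of Lemma \ref{lemma:anpal-pal}.
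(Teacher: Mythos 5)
Your proof is correct and follows essentially the same route as the paper's: unfold the semantics of $[\phi\sai]\bot$ to reduce the claim to ``$\safe\phi$ iff $K_a\safe\phi$ for some $a$'', then use reflexivity for one direction and the fixed-point property of $\safe$ (which yields $\bigvee_{G\in N^3}E_G(\phi\wedge\safe\phi)$) for the other. Your write-up is just a more detailed spelling-out of the paper's one-line chain of equivalences.
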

\begin{proof}
    $M,s \models \neg [\phi\sai]\bot$ iff there is an $a \in N$ such that $M,s \models K_a\safe\phi$ and $M^{\phi\sai}, (s,a) \not\models \bot$ iff there is an $a \in N$ such that $M,s \models K_a\safe\phi$ iff (by reflexivity in one direction and the fixed-point definition of $\safe$ in the other) $M,s \models \safe \phi$.
\end{proof}

We thus get the mentioned result.
\begin{corollary}
    $\saifelang\approx \sailang \approx \safelang$.
\end{corollary}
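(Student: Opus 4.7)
The plan is to combine the two preceding lemmas. The embeddings $\safelang \preccurlyeq \saifelang$ and $\sailang \preccurlyeq \saifelang$ are immediate since both are syntactic fragments of $\saifelang$, so only three directions require genuine argument: $\saifelang \preccurlyeq \sailang$, $\sailang \preccurlyeq \safelang$, and $\safelang \preccurlyeq \sailang$. The middle one is exactly Lemma \ref{lemma:exp1}.

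For $\saifelang \preccurlyeq \sailang$, I would proceed by induction on the structure of $\chi \in \saifelang$. The atomic, Boolean, $K_i$, and $[\cdot\sai]$ cases go through by applying the inductive hypothesis componentwise to immediate subformulas. The interesting case is $\chi = \safe\phi$: here the inductive hypothesis yields $\phi' \in \sailang$ with $\phi$ equivalent to $\phi'$, and Lemma \ref{lemma:exp2} then gives $\safe\phi \equiv \safe\phi' \equiv \neg[\phi'\sai]\bot$, which lies in $\sailang$. Since $\safelang \subseteq \saifelang$, the same induction delivers $\safelang \preccurlyeq \sailang$ as a free byproduct, and chaining with Lemma \ref{lemma:exp1} yields $\saifelang \preccurlyeq \sailang \preccurlyeq \safelang \preccurlyeq \saifelang$, closing the cycle.

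The one step I would take care to justify explicitly, and which I expect to be the only genuine obstacle, is that replacing $\phi$ by an equivalent $\phi'$ inside $\safe$ or inside $[\cdot\sai]$ preserves truth, since the induction quietly uses this in each of the $[\cdot\sai]$ and $\safe$ cases. For $\safe$ this is immediate from its fixed-point definition, which depends on $\phi$ only through the extension $\ext{\phi}_M$; for $[\phi\sai]$, both the domain and the accessibility of the updated model $M^{\phi\sai}$ are defined in terms of the set of states satisfying $K_a\safe\phi$, which is in turn extensional in $\phi$ by the previous sentence. With this replacement-of-equivalents property stated as a preliminary observation, the inductive translation runs without incident and the corollary follows.
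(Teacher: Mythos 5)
Your proposal is correct and follows essentially the same route as the paper: the corollary is obtained by chaining Lemma \ref{lemma:exp1} (for $\sailang \preccurlyeq \safelang$) with Lemma \ref{lemma:exp2} (for eliminating $\safe$ in favour of $\neg[\cdot\sai]\bot$, giving both $\saifelang \preccurlyeq \sailang$ and $\safelang \preccurlyeq \sailang$). You merely make explicit the structural induction and the replacement-of-equivalents step that the paper's one-line proof leaves implicit; both of your justifications for that step are sound.
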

\begin{proof}
    From Lemma \ref{lemma:exp2} we get that $\sailang \approx \saifelang$. From Lemma \ref{lemma:exp1} we get that $\sailang \preccurlyeq \safelang$, and from Lemma \ref{lemma:exp2} again that $\safelang \preccurlyeq \sailang$.
\end{proof}

Going back to the comparison with AMLC above and the mentioned non-reduction of $[M,s]C_G\psi$, the difference is that in $[\phi\sai]\safe\psi$ we can replace $\safe\psi$ with $\neg [\psi\sai]\bot$ and then reduce $[\phi\sai]\neg [\psi\sai]\bot$. This reduction might re-introduce $\safe$-operators, but outside the scope of any announcement modalities.

Thus, again the dynamic anonymous announcement operators can actually be expressed in a purely static language: $\safelang$. We thus move on to axiomatising safety.

\subsection{Axiomatisation of safety}

Observe that $\safe$ does not distribute over implication: $\not\models\safe(\phi \rightarrow \psi) \rightarrow (\safe \phi \rightarrow \safe \psi)$.
In particular, $\bigvee_{G \in N^3}(\phi \rightarrow \psi)$ and $\bigvee_{G \in N^3}\phi$ does not imply $\bigvee_{G \in N^3}\psi$ -- it might not be the same $G$ in the first two cases. For similar reasons the conjunctive closure axiom, $(\safe\phi \wedge \safe\psi)\rightarrow \safe(\phi \wedge \psi)$ does not hold. This is similar to \emph{somebody knows} \cite{sk}, but unlike most other group knowledge operators which are normal modalities. The other direction of conjunctive closure, \emph{monotonicity}, $\safe(\phi \wedge \psi) \rightarrow (\safe\phi \wedge \safe\psi)$, does hold (again, similarly to somebody knows). 

The axiomatic system $\safesys$ for $\safelang$ is shown in Table \ref{tab:syssafe}. The first two parts is a standard axiomatisation of propositional logic and the individual knowledge operators. The Monotonicity rule combines the monotonicity axiom and the replacement of equivalents rule standard in weak modal logics. We note that necessitation for $\safe$, from $\phi$ derive $\safe\phi$, follows (we have $\safe\top$ from the Induction rule). The Mix axiom (or \emph{fixed-point axiom}) says that $\safe \phi$ is indeed a fixed-point of $\bigvee_{G \in N^3} E_G(\phi \wedge x)$. Finally, the Induction rule give us a way to derive $\safe\phi$. Mix and the Induction rule can be seen as adaptions of similar axioms/rules for common knowledge, see, e.g., \cite{Fagin:1995hc,vDvdHK2007}.

We will use the following shorthand:
\[\safe_n = \phi \wedge \bigvee_{G_1 \in N^3}E_{G_1}(\phi \wedge \bigvee_{G_2 \in N^3}E_{G_2}(\phi \wedge \bigvee_{G_3 \in N^3}E_{G_3}(\phi \wedge \cdots \wedge \bigvee_{G_n \in N^3}E_{G_n}\phi))).\]
Thus, $\safe_0 = \phi$, $\safe_1 = \phi \wedge \bigvee_{G_1 \in N^3}E_{G_1}\phi$, and so on, and
\begin{equation}
    \label{eq:sem}
    M,s \models \safe \phi \text { iff } M,s \models \safe_n \phi \text { for all $n$}.
    \end{equation}

\begin{table}
    \centering
    \begin{tabular}{ll}\toprule
        all instances of propositional tautologies& Prop\\
        From $\phi \rightarrow \psi$ and $\phi$, derive $\psi$& Modus ponens\\
        \midrule
        $K_a(\phi \rightarrow \psi) \rightarrow (K_a\phi \rightarrow K_a \psi)$& \text{Distribution}\\
        $K_a \phi\rightarrow \phi$& \text{Truth}\\
        $\neg K_a\phi \rightarrow K_a\neg K_a\phi$& \text{Negative introspection}\\
        From $\phi$, derive $K_a\phi$& Necessitation\\ 
        \midrule
        $\safe \phi \rightarrow \bigvee_{G \in N^3} E_G(\phi \wedge \safe \phi)$& \text{Mix}\\
        From $\phi \rightarrow \bigvee_{G \in N^3}E_G\phi$, derive $\phi \rightarrow \safe \phi$& Induction\\
        From $\phi \rightarrow \psi$, derive $\safe \phi \rightarrow \safe \psi$& Monotonicity\\
    \bottomrule
    \end{tabular}
    \caption{The proof system \safesys.}
    \label{tab:syssafe}
\end{table}

\subsubsection{Soundness}

\begin{lemma}[Mix Axiom]
    \label{lemma:mix}
    $\models \safe \phi \rightarrow \bigvee_{G \in N^3} E_G(\phi \wedge \safe\phi)$
\end{lemma}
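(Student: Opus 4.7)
The plan is to derive the Mix axiom directly from the semantics of $\safe$ as a greatest fixed point, using standard Knaster--Tarski style reasoning.

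First, I would recall the definition: on any model $M$, $\ext{\safe\phi}^M$ is by definition $\bigcup\{S : S \subseteq f(S)\}$, where $f(S) = \ext{\bigvee_{G \in N^3} E_G(\phi \wedge x)}^M_{V[x=S]}$. Because $x$ occurs only positively in $\bigvee_{G \in N^3} E_G(\phi \wedge x)$ (it sits under the normal box $K_i$ for each $i \in G$, inside a positive conjunction and disjunction), $f$ is monotonic on $2^S$. A standard fact about monotonic operators on a powerset lattice is that the union of all post-fixed points is itself a post-fixed point, indeed the greatest fixed point. So $\ext{\safe\phi}^M \subseteq f(\ext{\safe\phi}^M)$.

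Now I would unpack what that inclusion says pointwise. Suppose $M,s \models \safe\phi$, i.e.\ $s \in \ext{\safe\phi}^M$. Then $s \in f(\ext{\safe\phi}^M)$, which by definition of $f$ means that, interpreting the free variable $x$ as $\ext{\safe\phi}^M$, we have $M,s \models \bigvee_{G \in N^3} E_G(\phi \wedge x)$. Hence there exists $G \in N^3$ such that for every $i \in G$ and every $t$ with $s \sim_i t$, both $M,t \models \phi$ and $t \in \ext{\safe\phi}^M$, i.e.\ $M,t \models \phi \wedge \safe\phi$. That is exactly $M,s \models \bigvee_{G \in N^3} E_G(\phi \wedge \safe\phi)$, which is the consequent of Mix.

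The only step that requires more than bookkeeping is the justification that $\ext{\safe\phi}^M$ is itself a post-fixed point of $f$; the potential obstacle is someone objecting that the ``greatest post-fixed point equals greatest fixed point'' fact needs monotonicity of $f$, but this is immediate from the positive occurrence of $x$. Everything else is a direct unfolding of the semantic clauses for $E_G$ and $\bigvee$. One could alternatively prove Mix via the iterative characterisation in Lemma~\ref{lemma:iteration}, showing $\safe_{n+1}\phi$ holds at $s$ for all $n$ implies the corresponding $E_G$-witness can be chosen uniformly, but the fixed-point route above is cleaner and avoids any dependency on a compactness-style argument about the choice of $G$.
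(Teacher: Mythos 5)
Your proof is correct and follows essentially the same route as the paper's: the paper simply asserts that $\ext{\safe\phi}$ is a fixed point of the operator $f$ and that the claim ``follows immediately,'' while you spell out the two steps it leaves implicit (monotonicity of $f$ from the positive occurrence of $x$, hence the union of post-fixed points is itself a post-fixed point, and the pointwise unpacking of $s \in f(\ext{\safe\phi})$ into the consequent of Mix). No substantive difference in approach, just a more explicit write-up.
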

\begin{proof}
    By definition, $\ext{\safe\phi}$ is a fixed-point of the function $f_{\bigvee_{G \in N^3}(\phi \wedge x)}(A)$ that maps a set of states $A$ to the set of states where $\bigvee_{G \in N^3}(\phi \wedge x)$ is true when $x$ is true exactly in $A$. It follows immediately that $M,s \models \safe\phi$ iff $M,s \models \bigvee_{G \in N^3} E_G (\phi \wedge \safe \phi)$.
\end{proof}

\begin{lemma}[Induction Rule]
    \label{lemma:indrule}
    If $\models \phi \rightarrow \bigvee_{G \in N^3}E_G\phi$, then $\models \phi \rightarrow \safe \phi$.
\end{lemma}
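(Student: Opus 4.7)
The plan is to prove the Induction Rule directly from the semantic definition of $\safe$ as the union of all post-fixed points, which was given just before Lemma~\ref{lemma:iteration}. Recall that for any model $M$,
\[\ext{\safe\phi}^M = \bigcup\left\{S : S \subseteq \ext{\textstyle\bigvee_{G \in N^3} E_G(\phi \wedge x)}^M_{[x = S]}\right\}.\]
So the whole strategy reduces to exhibiting \emph{one} post-fixed point that contains $\ext{\phi}^M$. The natural candidate is $\ext{\phi}^M$ itself.

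First I would fix an arbitrary pointed model $M,s$ and assume the premise of the rule holds in $M$, i.e., $\ext{\phi}^M \subseteq \ext{\bigvee_{G \in N^3}E_G\phi}^M$. I would then show that $\ext{\phi}^M$ is a post-fixed point of the monotone operator associated with $\safe\phi$, i.e., that
\[\ext{\phi}^M \subseteq \ext{\textstyle\bigvee_{G \in N^3} E_G(\phi \wedge x)}^M_{[x = \ext{\phi}^M]}.\]
This is where the bulk of the argument lies, but the step is short: take any $s \in \ext{\phi}^M$; by the premise pick $G \in N^3$ with $M,s \models E_G\phi$; for every $a \in G$ and every $t$ with $s \sim_a t$ we have $t \in \ext{\phi}^M$, hence $t$ satisfies the variable $x$ under the modified valuation as well; thus $M,s \models E_G(\phi \wedge x)$ under $V[x = \ext{\phi}^M]$, and so $s$ lies in the right-hand set.

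Having established that $\ext{\phi}^M$ is a post-fixed point, it appears in the union defining $\ext{\safe\phi}^M$, so $\ext{\phi}^M \subseteq \ext{\safe\phi}^M$. Since $M$ was arbitrary, $\models \phi \rightarrow \safe\phi$ follows.

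I do not expect any real obstacle here; the proof is essentially the standard Knaster–Tarski / greatest-fixed-point argument, transposed from common knowledge to the disjunction over triples. The only subtlety worth flagging is the quantifier order: the $G$ witnessing $E_G\phi$ can differ from state to state within $\ext{\phi}^M$, but that is exactly what the semantic definition of $\bigvee_{G \in N^3}E_G(\phi \wedge x)$ pointwise allows, so nothing needs to be uniform across states — matching the intuition stressed after Lemma~\ref{lemma:iteration} that the triple of agents may change along the path.
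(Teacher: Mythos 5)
Your proof is correct, but it takes a genuinely different route from the paper's. You work directly with the definition of $\ext{\safe\phi}$ as the union of all post-fixed points of the operator $S \mapsto \ext{\bigvee_{G \in N^3} E_G(\phi \wedge x)}^M_{[x=S]}$, and you exhibit $\ext{\phi}^M$ itself as a post-fixed point (using the premise to pick, pointwise, a triple $G$ with $M,s\models E_G\phi$, and noting that the successors then land back in $\ext{\phi}^M$). This is the standard Knaster--Tarski argument and it is self-contained: the only facts you need are the premise and the union-of-post-fixed-points definition, plus the (tacit but harmless) observation that $x$ does not occur in $\phi$, so changing the valuation of $x$ does not disturb $\ext{\phi}^M$. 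The paper instead goes through the iterative approximants $\safe_n$, proving by induction on $n$ that $\phi$ implies $\safe_n\phi$ and then invoking the characterisation $M,s\models\safe\phi$ iff $M,s\models\safe_n\phi$ for all $n$ (equation~(\ref{eq:sem})). That route is closer in spirit to the later completeness machinery, but it leans on equation~(\ref{eq:sem}), which is itself a nontrivial consequence of the operator being co-continuous on decreasing chains and is asserted rather than proved; your argument bypasses that dependency entirely. Your closing remark about the quantifier order --- that the witnessing triple $G$ may vary from state to state --- is exactly the right subtlety to flag, and is what distinguishes $\safe$ from $\bigvee_{G\in N^3} C_G$.
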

\begin{proof}
    Let $\models \phi \rightarrow \bigvee_{G \in N^3}E_G\phi$ and let $M,s \models \phi$. We show that $M,s \models \safe_n$ for all $n$, by induction on $n$. The base case, $n = 0$, is immediate. Assume that $M,s \models \safe_n$, i.e., that 
    \[M,s \models \phi \wedge \bigvee_{G_1 \in N^3}E_{G_1}(\phi \wedge \bigvee_{G_2 \in N^3}E_{G_2}(\phi \wedge \bigvee_{G_3 \in N^3}E_{G_3}(\phi \wedge \cdots \wedge \bigvee_{G_n \in N^3}E_{G_n}\phi))).\]
    It follows from validity of $\phi \rightarrow \bigvee_{G \in N^3}E_G\phi$ that also
        \[M,s \models \phi \wedge \bigvee_{G_1 \in N^3}E_{G_1}(\phi \wedge 
        \cdots \wedge \bigvee_{G_n \in N^3}E_{G_n}(\phi \wedge \bigvee_{G_{n+1}\in N^3}E_{G_{n+1}}\phi)),\]
        i.e., $M,s \models \safe_{n+1}\phi$.
\end{proof}

We thus 
get the following.
\begin{proposition}
    \safesys is sound.
\end{proposition}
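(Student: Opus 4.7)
The plan is to verify that each axiom is valid and that each inference rule preserves validity, using semantics from Definition (the $\models$ clauses) together with the fixed-point semantics given for $\safe$.

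First I would dispatch the standard parts. Propositional tautologies and Modus Ponens are sound as usual. Since $\sim_i$ is an equivalence relation for each $i \in N$, the S5 schemata Distribution, Truth and Negative Introspection are valid by the textbook arguments, and Necessitation preserves validity by the standard universal quantification argument over $\sim_i$-accessible states. None of this interacts with $\safe$, so I would only sketch it in a single sentence and cite, e.g., \cite{Fagin:1995hc,vDvdHK2007}.

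For the $\safe$-part, two of the three items are already established: the Mix axiom by Lemma \ref{lemma:mix}, and the Induction rule by Lemma \ref{lemma:indrule}. Thus the only remaining task is the Monotonicity rule: if $\models \phi \rightarrow \psi$, then $\models \safe\phi \rightarrow \safe\psi$. My approach is to use the characterisation of $\safe$ as the greatest fixed-point, expressed as a union of post-fixed points: fix a model $M$ and let $S = \ext{\safe\phi}_M$, so $S \subseteq \ext{\bigvee_{G \in N^3} E_G(\phi \wedge x)}^M_{V[x=S]}$. Assuming $\models \phi \rightarrow \psi$ we have $\ext{\phi}_M \subseteq \ext{\psi}_M$, hence at every $s \in S$ the group $G \in N^3$ witnessing membership of $s$ in the right-hand set for $\phi$ also witnesses it for $\psi$: for every $i \in G$ and every $t$ with $s \sim_i t$, if $M,t \models \phi \wedge x$ then $M,t \models \psi \wedge x$. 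Therefore $S$ is also a post-fixed point of $f_\psi(A) = \ext{\bigvee_{G \in N^3} E_G(\psi \wedge x)}^M_{V[x=A]}$, and by the union characterisation $S \subseteq \ext{\safe\psi}_M$, which is the desired semantic inclusion.

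The main obstacle, and really the only non-routine point, is being careful that the monotonicity argument genuinely goes through in the \emph{non-normal} setting induced by the disjunction over triples: one must check that replacing $\phi$ by a logically stronger formula does not force a change of witnessing group $G$ in a way that would break the post-fixed-point inclusion. The check above shows it does not, because the inclusion $\ext{\phi}_M \subseteq \ext{\psi}_M$ is pointwise and the chosen $G$ at each state $s$ can be kept fixed when passing from $\phi$ to $\psi$. Once this is in place, soundness of \safesys follows by routine induction on the length of derivations.
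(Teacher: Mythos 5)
Your proposal is correct and takes essentially the same route as the paper: the paper likewise handles the propositional and S5 parts as standard, cites Lemmas \ref{lemma:mix} and \ref{lemma:indrule} for the Mix axiom and the Induction rule, and dismisses Monotonicity as straightforward. Your post-fixed-point argument for the Monotonicity rule is a correct filling-in of exactly that omitted detail, since $S = \ext{\safe\phi}$ being a post-fixed point of the $\psi$-map immediately places it inside the union defining $\ext{\safe\psi}$.
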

\begin{proof}
    Validity (preservation) for the two first parts of the system is standard. Monotonicity is straightforward. Validity of the Mix axiom and validity preservation for the induction rule were shown in Lemmas \ref{lemma:mix} and \ref{lemma:indrule}, respectively. 
    \end{proof}

    \subsubsection{Completeness}

We now prove that $\safesys$ is also complete. Similar to epistemic logic with common knowledge, it is easy to see that the logic is not \emph{compact}. For example, \[\{\safe_n\phi : n \geq 0\} \cup \{\neg \safe \phi\}\]
is not satisfiable, but any finite subset of it is. Thus, we have to settle for \emph{weak} rather than \emph{strong} completeness. To that end, we will adapt the standard technique for common knowledge of defining a finite canonical model based on the finite syntactic closure of some formula. There is a complication: unlike $C_G$, $\safe$ is not a normal modality. This shows up in the axiomatisation: the K axiom is replaced with the (weaker) monotonicity rule. The consequence, aside from the fact that we cannot rely on distribution over implication like in proofs for common knowledge, is that there is no normal relational semantics for $\safe$ (like ``$\phi$ is true on all states reachable by a $G$-path" for $C_G\phi$). We now define an alternative definition of the semantics and show that it is equivalent, that we will use in the completeness proof.

Given a model $M$, a \emph{group assignment function} for $M$ is a function $f:W \rightarrow \pow(N)$ such that $f(s) \in N^3$ or $f(s) = \emptyset$ for all $s$, i.e., assigning a group of three agents to some of the states, such that for all $s,t \in W$ and $i \in f(s)$, if $s \sim_i t$ then $f(t) \neq \emptyset$. The idea is that $f$ works as an assignment of groups of three agents to certain states, such that if we follow the accessibility relations for those agents, $f$ again assigns a group of three agents to those states, and so on. 

An \emph{$f$-consistent path} in $M$ is a finite sequence of states $s_0s_1s_2\cdots s_m$ such that for all $0 \leq i < m$, $s_i \sim_a s_{i+1}$ for some $a \in f(s_i)$. We say that $\phi$ is true on a path if it is true in every state on that path.

\begin{lemma}
    \label{lemma:f}
    $M,s \models \safe \phi$ iff there is a group assignment function $f$ such that $f(s) \neq \emptyset$ and $\phi$ is true on every $f$-consistent path starting in $s$.
\end{lemma}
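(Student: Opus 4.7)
The plan is to use the iterative characterisation in Lemma \ref{lemma:iteration} (equivalently equation \eqref{eq:sem}) as the bridge: $M,s \models \safe\phi$ iff $M,s \models \safe_n\phi$ for all $n$. This reformulation fits naturally with the path-based condition, since a finite $f$-consistent path of length $n$ should correspond to witnessing the $n$-th approximant $\safe_n\phi$. The two directions then become (i) translating a fixpoint witness into a group assignment, and (ii) translating the existence of the group assignment into validity of every approximant at $s$.

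For the direction ($\Rightarrow$), I would exploit the fact that $\ext{\safe\phi}$ is a fixpoint of the monotone map $A \mapsto \ext{\bigvee_{G \in N^3} E_G(\phi \wedge x)}^M_{[x=A]}$. Hence at each $t \in \ext{\safe\phi}$ I can pick some $G_t \in N^3$ with $M,t \models E_{G_t}(\phi \wedge \safe\phi)$, and define $f(t) = G_t$ on $\ext{\safe\phi}$ and $f(t) = \emptyset$ elsewhere. Then $f(s) \neq \emptyset$, and the group-assignment closure condition is immediate from the witness property: if $i \in f(t) = G_t$ and $t \sim_i t'$, then $t' \models \safe\phi$ so $f(t') \neq \emptyset$. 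A short induction on path length shows that every state reachable from $s$ along an $f$-consistent path remains in $\ext{\safe\phi}$, and since $\safe\phi \to \phi$ (Lemma \ref{lemma:iteration}), $\phi$ is true on that entire path.

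For the direction ($\Leftarrow$), the key observation is that every state on an $f$-consistent path from $s$ has non-empty $f$-value — this is a trivial induction using $f(s) \neq \emptyset$ and the group-assignment closure condition. Letting $T$ denote the set of all such reachable states, I would prove by induction on $n$ that $M,t \models \safe_n\phi$ for every $t \in T$. The base case $n=0$ is $t \models \phi$, which holds by hypothesis. The step uses $G := f(t) \in N^3$ as the existential witness in $\safe_{n+1}\phi = \phi \wedge \bigvee_{G' \in N^3} E_{G'}\safe_n\phi$: for any $i \in G$ and $t'$ with $t \sim_i t'$, extending the path from $s$ to $t$ by one step shows $t' \in T$, so the induction hypothesis gives $t' \models \safe_n\phi$. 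Applying the claim to $s$ and quoting \eqref{eq:sem} yields $M,s \models \safe\phi$.

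The proof is largely a translation between fixpoint semantics and path semantics, analogous to the corresponding fact for common knowledge; the main novelty is that the group of agents varies from state to state along the path, and $f$ is exactly the record of this dynamic choice. The only mild obstacle is a bookkeeping one: choosing the witness groups $G_t$ in the forward direction is an appeal to choice, and one has to be careful to track that $f$-consistent paths never escape $\ext{\safe\phi}$ (respectively, $T$) so that the inductive hypothesis remains applicable all the way along.
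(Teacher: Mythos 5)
Your proof is correct and takes essentially the same approach as the paper's: the left-to-right direction defines $f$ by choosing, at each state in $\ext{\safe\phi}$, a witnessing group from the unfolded fixpoint and then inducts on path length, exactly as the paper does. The only (interchangeable) difference is in the right-to-left direction, where you argue directly that every approximant $\safe_n\phi$ holds at every state reachable along an $f$-consistent path, while the paper proves the contrapositive by extracting a falsifying $f$-consistent path from a failed approximant $\safe_n\phi$.
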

\begin{proof}
    \textit{Left-to-right:} let $M,s \models \safe \phi$. Define $f$ as follows: for any state $t$, $f(t) = G$ for some $G$ such that $M,t \models E_G\safe \phi$ if such a $G$ exists, $f(t) = \emptyset$ if not. If there are several such $G$ chose any of them. We must show that $f$ is indeed a group assignment function. Let $i \in f(s')$ and $s' \sim_i t$; we must show that $f(t) \neq \emptyset$.  Since $i \in f(s')$, there is a $G\subseteq N$ such that $M,s' \models E_G \safe \phi$ and $i \in G$. Thus $M,t \models \safe\phi$. By 
    the Mix axiom, $M,t \models E_{G'}\safe \phi$ for some $G'$, so $f(t) \neq \emptyset$. Finally, $f(s) \neq \emptyset$ from $M,s \models \safe\phi$ and validity of Mix.
    
    We proceed by induction on the length of any $f$-consistent path starting in $s$. The base case is immediate. Consider an $f$-consistent path $s_0\cdots s_ns_{n+1}$ where $s_0 = s$. By the induction hypothesis $s_0, \cdots, s_n$ are $\phi$-states. Since $s_n \sim_a s_{n+1}$ for some $a \in f(s_n)$ and $M,s_n \models E_{f(s_n)} \safe \phi$, also $M,s_{n+1} \models \phi$.

    \textit{Right-to-left:} we reason by contraposition. Assume that $M,s \not\models \safe \phi$, i.e., that $M,s \not \models \safe_n \phi$ for some $n$. Let $f$ be a group assignment function such that $f(s) \neq \emptyset$. We must show that there is a $f$-consistent path starting in $s$ where $\phi$ is not true. We have that:
    \[M,s \models \neg\phi \vee \bigwedge_{G_1 \in N^3}\bigvee_{i_1 \in G_1}\hat{K}_{i_1}(\neg \phi \vee \bigwedge_{G_2 \in N^3}\bigvee_{i_2 \in G_2}\hat{K}_{i_2}(\neg \phi \vee 
    \cdots \vee \bigwedge_{G_n \in N^3}\bigvee_{i_n \in G_n}\hat{K}_{i_n}\neg\phi)).\]
    Thus, there is a $k \leq n$ such that
    \[M,s \models \bigwedge_{G_1 \in N^3}\bigvee_{i_1 \in G_1}\hat{K}_{i_1} \bigwedge_{G_2 \in N^3}\bigvee_{i_2 \in G_2}\hat{K}_{i_2}\bigwedge_{G_3 \in N^3}\bigvee_{i_3 \in G_3}\hat{K}_{i_3} \cdots \vee \bigwedge_{G_k \in N^3}\bigvee_{i_k \in G_n}\hat{K}_{i_k}\neg\phi.\]
    Thus there is a path $s_0s_1\cdots s_{k+1}$ where:
    \[\begin{array}{lll}
        s_0 = s\\
        s_0 \sim_{i_1} s_1 & \text{ some } i_1 \in f(s_0)\\
        s_1 \sim_{i_2} s_2 & \text{ some } i_2 \in f(s_1)\\
        \cdots\\        
        s_k \sim_{i_{k+1}} s_{k+1} & \text{ some } i_{k+1} \in f(s_k)
        \end{array}
        \]
    such that $M,s_{k+1} \models \neg \phi$. This is an $f$-consistent path.
\end{proof}

We now proceed with defining the finite canonical model, and proving a truth lemma. We adapt the standard proof for common knowledge (see \cite[Chapter 7]{vDvdHK2007}\footnote{In addition to the complications mentioned above, there are other differences, including that the correspondent to the \emph{induction axiom} for common knowledge, $\safe(\phi \rightarrow \bigvee_{G \in N^3} \phi) \rightarrow (\phi \rightarrow \safe\phi)$, does not hold. We use a variant of the induction rule for common knowledge, used in, e.g., \cite{Fagin:1995hc}}).

\begin{definition}
    The closure $\cl(\gamma)$ of a formula $\gamma$ is the smallest set that contains all subformulas of $\gamma$, is closed under single negations, and that contains $E_G\safe \phi$ for all $G \subseteq N$ whenever it contains $\safe \phi$.
\end{definition}
The closure of any formula is finite. As usual we say that a a set of formulas $\Gamma$ is maximal consistent in $\cl(\gamma)$ for some given $\gamma$ iff $\Gamma \subseteq \cl(\gamma)$, $\Gamma$ is consistent, and there is no consistent $\Gamma' \subseteq \cl(\gamma)$ such that $\Gamma \subset \Gamma'$. When $\Delta$ is a set of formulas we write $\underline{\Delta}$ for $\bigwedge_{\delta \in \Delta}\delta$.

The canonical model is defined as follows.

\begin{definition}
    The canonical model for some formula $\gamma$ is $M^\gamma = (S^\gamma, \sim^\gamma, V^\gamma)$ where:
    \begin{itemize}
        \item $S^\gamma$ is the set of all sets of formulas maximal consistent in $\cl(\gamma)$,
        \item $\Gamma \sim^\gamma \Delta$ iff $\{K_i \phi : K_i \phi \in \Gamma\} = \{K_i \phi : K_i\phi \in \Delta\}$, and
        \item $V^\gamma(p) = \{\Gamma \in S^\gamma : p \in \Gamma\}$.
    \end{itemize}
\end{definition}

This definition of the canonical model is identical to the one used for common knowledge \cite{vDvdHK2007}, except that the closure is wider (it contains $K_i\safe \phi$ for every $i$, instead of $K_iC_G\phi$ for every $i \in G$). Many of the properties of that model, like deductive closure of maximal consistent sets in $cl(\gamma)$, carry over, and we only need to focus on the following property for the $\safe$ modality.

A $\phi$-path in the canonical model is a path $\Gamma_0\Gamma_1\cdots$ where $\phi \in \Gamma_i$ for every $i$.

\begin{lemma}
    \label{lemma:safe}
    For any formula $\gamma$ and $\Gamma \in S^\gamma$, if $\safe\phi \in \cl(\gamma)$ then $\safe \phi \in \Gamma$ iff there is a group assignment function $f$ such that $f(\Gamma) \neq \emptyset$ and every $f$-consistent path from $\Gamma$ is a $\phi$-path.
\end{lemma}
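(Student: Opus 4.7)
My approach mirrors the proof of Lemma \ref{lemma:f}, but replaces semantic reasoning with derivational reasoning inside the canonical model.

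For the left-to-right direction I would assume $\safe\phi \in \Gamma$ and define $f(\Delta) = G$ if $E_G\safe\phi \in \Delta$ for some $G \in N^3$ (fixing one such $G$ by a choice rule), and $f(\Delta) = \emptyset$ otherwise. To see $f(\Gamma) \neq \emptyset$: by Mix, $\underline{\Gamma} \vdash \bigvee_{G \in N^3} E_G(\phi \wedge \safe\phi)$, hence $\underline{\Gamma} \vdash \bigvee_{G \in N^3} E_G \safe\phi$; since each $E_G\safe\phi \in \cl(\gamma)$, maximality forces some $E_G\safe\phi \in \Gamma$. The group-assignment-function condition is immediate from the definition of $\sim^\gamma_i$: if $K_i\safe\phi \in \Delta$ and $\Delta \sim^\gamma_i \Delta'$ then $K_i\safe\phi \in \Delta'$, so by Truth $\safe\phi \in \Delta'$, and the same Mix argument gives $f(\Delta') \neq \emptyset$. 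For the $\phi$-path property I would first derive $\safe\phi \rightarrow \phi$ (each disjunct $E_G(\phi\wedge\safe\phi)$ of Mix entails $K_i\phi$ for any $i \in G$ and hence $\phi$ via Truth), then apply the same step-by-step argument along the path, since every state visited still contains $\safe\phi$.

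For the harder right-to-left direction I would adapt the induction-rule argument standard in completeness proofs for common knowledge. Let $T$ be the set of $\Delta \in S^\gamma$ that witness the right-hand side, and set $\psi := \bigvee_{\Delta \in T}\underline{\Delta}$. Taking the length-zero path shows $\phi \in \Delta$ for every $\Delta \in T$, so $\vdash \psi \rightarrow \phi$. The pivotal step is $\vdash \psi \rightarrow \bigvee_{G \in N^3} E_G \psi$. For each $\Delta \in T$ with witnessing $f_\Delta$, set $G_\Delta := f_\Delta(\Delta)$; for $i \in G_\Delta$ and $\Delta \sim^\gamma_i \Delta'$, the same $f_\Delta$ witnesses $\Delta' \in T$, because $f_\Delta(\Delta') \neq \emptyset$ by the group-assignment-function property, and any $f_\Delta$-consistent path from $\Delta'$ prefixed by $\Delta$ is an $f_\Delta$-consistent path from $\Delta$, hence a $\phi$-path. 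Combining this with the standard canonical-model fact $\vdash \underline{\Delta} \rightarrow K_i \bigvee_{\Delta' : \Delta \sim^\gamma_i \Delta'} \underline{\Delta'}$ and normality of $K_i$ yields $\underline{\Delta} \vdash E_{G_\Delta}\psi$, hence $\underline{\Delta} \vdash \bigvee_G E_G\psi$; summing over $\Delta \in T$ gives the desired implication. The Induction rule then delivers $\vdash \psi \rightarrow \safe\psi$, Monotonicity on $\psi \rightarrow \phi$ gives $\vdash \safe\psi \rightarrow \safe\phi$, so $\vdash \psi \rightarrow \safe\phi$; since $\Gamma \in T$ makes $\underline{\Gamma}$ a disjunct of $\psi$ and $\safe\phi \in \cl(\gamma)$, maximality concludes $\safe\phi \in \Gamma$.

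The principal obstacle is running this induction-rule argument without a normal $\safe$: one must commit to a single $G_\Delta$ per state rather than letting the group vary freely, and lean on Monotonicity where completeness proofs for common knowledge would use distribution/necessitation for $C_G$. A secondary subtlety is that Mix produces $E_G(\phi\wedge\safe\phi)$, which need not lie in $\cl(\gamma)$; I sidestep this by always weakening to $E_G\safe\phi$, which by construction of the closure is in $\cl(\gamma)$, so that maximality in the closure can be invoked.
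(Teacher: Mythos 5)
Your proposal is correct and follows essentially the same strategy as the paper's proof: the left-to-right direction is identical (define $f$ via membership of $E_G\safe\phi$, use Mix, Truth and the closure condition), and the right-to-left direction is the same Induction-plus-Monotonicity argument applied to the disjunction of the ``good'' maximal consistent sets. The only differences are cosmetic: you let the witnessing group assignment function vary per state rather than fixing the given $f$ throughout, and you establish $\vdash \psi \rightarrow \bigvee_{G \in N^3}E_G\psi$ directly via the canonical-model fact $\vdash \underline{\Delta} \rightarrow K_i\bigvee_{\Delta \sim^\gamma_i \Delta'}\underline{\Delta'}$, where the paper argues by contradiction --- both are sound.
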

\begin{proof}
    \textit{Left-to-right}: let $\safe\phi \in \Gamma$. We define $f$ as follows: for any $\Delta \in S^\gamma$, $f(\Delta) = G$ if $E_G\safe\phi \in \Delta$ for some $G \subseteq N$ and $f(\Delta) = \emptyset$ otherwise. If there are several such $G$, chose one of them. Consider a $\Delta$ such that $f(\Delta) \neq \emptyset$ and $\Delta \sim_i^\gamma \Delta'$ for some $i \in f(\Delta)$. We must show that $f(\Delta') \neq \emptyset$. Since $K_i\safe\phi \in \Delta$, $K_i \safe\phi \in \Delta'$ by definition of $\sim_i^\gamma$, and thus $\safe\phi \in \Delta'$ by the Truth axiom, and $E_{G'} \safe\phi \in \Delta'$ for some $G'$ by Mix and the fact that $E_{G'}\safe\phi\in \cl(\gamma)$. Finally, $f(\Gamma) \neq \emptyset$ also follows from Mix and the definition of the closure.
    
    We must show that every $f$-consistent path from $\Gamma$ is a $\phi$-path. The proof is by induction on the length of the path. For the base case, we have that $\phi \in \Gamma$ from the Mix and Truth axioms. For the induction step, consider an $f$-consistent path $\Gamma_0\cdots\Gamma_n\Gamma_{n+1}$ where $\phi \in \Gamma_j$ for all $j \leq n$. From $f$-consistency we have that $\Gamma_n \sim^\gamma_i \Gamma_{n+1}$ for some 
    $i \in G$ for some $G$ such that $E_G\safe\phi \in \Gamma_n$. It follows from the definition of $\sim_i^\gamma$ that $K_i\safe\phi \in \Gamma_{n+1}$, and thus that $\safe\phi \in \Gamma_{n+1}$ from the Truth axiom, and thus that $\phi \in \Gamma_{n+1}$ from the Mix axiom and Truth axiom again.

    \textit{Right-to-left}: let $f$ be such that $f(\Gamma) \neq \emptyset$ and every $f$-consistent path is a $\phi$-path. Let $S_{f,\phi}$ be the set of all sets $\Delta$ maximal consistent in $\cl(\gamma)$ such that $f(\Delta) \neq \emptyset$ and every $f$-consistent path from $\Delta$ is a $\phi$-path. Let
    $\chi = \bigvee_{\Delta \in S_{f,\phi}} \underline{\Delta}$.
    We first show that
    \begin{equation}
        \label{eq:c}
        \vdash \chi \rightarrow \bigvee_{G \in N^3}E_G\chi
    \end{equation}
    Assume, towards a contradiction, that $\chi \wedge \neg \bigvee_{G \in N^3}E_G\chi$ is consistent. Then $\underline{\Delta} \wedge \neg \bigvee_{G \in N^3}E_G\chi$ is consistent for some $\Delta \in S_{f,\phi}$. Then, for any $G \in N^3$ there must be a $i_G \in G$ such that $\underline{\Delta} \wedge \hat{K}_{i_G}\neg\chi$ is consistent.  $\underline{\Delta} \wedge \hat{K}_{i_G}\vee_{\Theta \in S^\gamma \setminus S_{f,\phi}} \underline{\Theta}$ is consistent for each $i_G$, and by modal reasoning for individual knowledge, $\underline{\Delta} \wedge \bigvee_{\Theta \in S^\gamma \setminus S_{f,\phi}} \hat{K}_{i_G}\underline{\Theta}$ is consistent for each $i_G$. Thus, for every $G \in N^3$ there is an $i_G \in G$ and a $\Theta_{i_G} \in S^\gamma \setminus S_{f,\phi}$ such that $\underline{\Delta} \wedge \hat{K}_{i_G} \underline{\Theta_{i_G}}$ is consistent. In particular, since $\Delta \in S_{f,\phi}$ and thus $f(\Delta) \neq \emptyset$, there is an $i_G \in G = f(\Delta)$ and a $\Theta_{i_G} \in S^\gamma \setminus S_{f,\phi}$ such that $\underline{\Delta} \wedge \hat{K}_{i_G} \underline{\Theta_{i_G}}$ is consistent. It follows (by a standard property of the canonical model \cite[Item 4 of Lemma 7.14]{vDvdHK2007}) that $\Delta \sim^\gamma_{i_G} \Theta_{i_G}$. Since $i_G \in f(\Delta)$, $f(\Theta_{i_G}) \neq \emptyset$ and since $\Theta_{i_G}$ is not in $S_{f,\phi}$ that means that there is a an $f$-consistent path from $\Theta_{i_G}$ which is not a $\phi$-path. It follows that there is an $f$-consistent path from $\Delta$ that is not a $\phi$-path. But $\Delta \in S_{f,\phi}$, which leads to a contradiction. 
    Thus, we have shown (\ref{eq:c}).

    From (\ref{eq:c}) and the induction rule, we get that
    $\vdash \chi \rightarrow \safe \chi$.
    Since $\Gamma$ is one of the disjuncts in $\chi$ we have that $\vdash \underline{\Gamma} \rightarrow \chi$. Thus, $\vdash \underline{\Gamma} \rightarrow \safe \chi$, and $\safe\chi \in \Gamma$.  Since $\phi \in \bigcap_{\Delta \in S_{f,\phi}} \Delta$, we also have that $\vdash \chi \rightarrow \phi$. By Monotonicity, $\vdash \safe \chi \rightarrow \safe \phi$, and since $\safe\phi \in \cl(\gamma)$, we also have that $\safe\phi \in \Gamma$.
    \end{proof}

\begin{lemma}[Truth]
    For any formulas $\gamma$ and $\phi$ and any $\Gamma \in S^\gamma$,
    $\phi \in \Gamma \text{ iff } M^\gamma,\Gamma \models \phi$.
\end{lemma}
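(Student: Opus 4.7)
The plan is to prove the Truth Lemma by induction on the structure of $\phi$, so I will need to handle the atomic case, the Boolean cases, the $K_i$ case, and finally the $\safe$ case, which is where the real work lies.

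For the base case $\phi = p$, membership in $\Gamma$ and truth at $\Gamma$ both reduce directly to $\Gamma \in V^\gamma(p)$ by definition. For the Boolean cases, I would invoke the standard facts that maximal consistent subsets of $\cl(\gamma)$ are closed under single negations and under conjunctions (of members that lie in $\cl(\gamma)$), so $\neg\phi \in \Gamma$ iff $\phi \notin \Gamma$, and similarly for conjunction, and then apply the induction hypothesis. For the $K_i$ case, the forward direction is immediate from the definition of $\sim^\gamma_i$ together with the Truth axiom: if $K_i\phi \in \Gamma$ and $\Gamma \sim^\gamma_i \Delta$, then $K_i \phi \in \Delta$ and hence $\phi \in \Delta$, so the induction hypothesis gives $M^\gamma,\Delta \models \phi$. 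The backward (contrapositive) direction is the standard existence lemma argument: if $K_i \phi \notin \Gamma$, one extends $\{K_i\psi : K_i\psi \in \Gamma\} \cup \{\neg\phi\}$ (restricted to $\cl(\gamma)$) to a maximal consistent $\Delta \subseteq \cl(\gamma)$ with $\Gamma \sim^\gamma_i \Delta$ and $\phi \notin \Delta$, then applies the induction hypothesis.

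The main obstacle is the $\safe \phi$ case. My plan is to route it through Lemma \ref{lemma:safe} on the syntactic side and Lemma \ref{lemma:f} on the semantic side, gluing them via the induction hypothesis. If $\safe\phi \in \Gamma$, Lemma \ref{lemma:safe} supplies a group assignment function $f$ on $M^\gamma$ with $f(\Gamma)\neq\emptyset$ such that every $f$-consistent path from $\Gamma$ is a $\phi$-path, i.e., contains $\phi$ syntactically in every state. By the induction hypothesis, this means $\phi$ is \emph{true} at every state on every such path, so Lemma \ref{lemma:f} yields $M^\gamma,\Gamma \models \safe \phi$. For the converse, suppose $\safe\phi \notin \Gamma$; since $\safe\phi \in \cl(\gamma)$, Lemma \ref{lemma:safe} tells us that no group assignment function $f$ with $f(\Gamma)\neq\emptyset$ has all $f$-consistent paths from $\Gamma$ being $\phi$-paths. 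For any candidate $f$, this failure produces an $f$-consistent path from $\Gamma$ and a state $\Delta$ on it with $\phi \notin \Delta$; by the induction hypothesis $M^\gamma,\Delta \not\models \phi$, so by Lemma \ref{lemma:f} we get $M^\gamma,\Gamma \not\models \safe\phi$.

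The technical subtlety to be careful about is that Lemma \ref{lemma:safe} is formulated in terms of membership of $\phi$ in the sets of formulas along the path (a purely syntactic condition), whereas Lemma \ref{lemma:f} is formulated in terms of truth of $\phi$ at the states of the path. The induction hypothesis is exactly the bridge between these two notions, but one must apply it uniformly along the whole (finite) path, which is unproblematic because each state on the path is a member of $S^\gamma$ and $\phi$ is a strict subformula of $\safe\phi$, hence lies in $\cl(\gamma)$. With this translation in hand, both directions of the $\safe$-case follow, completing the induction.
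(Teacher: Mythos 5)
Your proof is correct and follows essentially the same route as the paper: induction on the structure of $\phi$, with the standard cases handled in the usual way and the $\safe\phi$ case settled by chaining Lemma \ref{lemma:f} (semantic characterisation via group assignment functions) to Lemma \ref{lemma:safe} (syntactic characterisation in the canonical model), using the induction hypothesis as the bridge between ``$\phi$ true at each state on the path'' and ``$\phi$ a member of each state on the path''. Your explicit remark that $\phi$, being a strict subformula of $\safe\phi$, lies in $\cl(\gamma)$ so the induction hypothesis applies uniformly along the path is exactly the point the paper leaves implicit.
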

\begin{proof}
    The proof is by induction on the structure of $\phi$. We only show the induction step for $\phi = \safe \psi$, the other cases are straightforward and/or standard.

    We have that $M^\gamma, \Gamma \models \safe\psi$ iff (by Lemma \ref{lemma:f}) there is an $f$ such that $f(\Gamma) \neq \emptyset$ and $\psi$ is true on every $f$-consistent path starting in $\Gamma$ iff (by the induction hypothesis) there is an $f$ such that $f(\Gamma) \neq \emptyset$ and every $f$-consistent path starting in $\Gamma$ is a $\psi$-path iff (by Lemma \ref{lemma:safe}) $\safe \psi \in \Gamma$. 
  \end{proof}
  
We immediately get the following.
\begin{theorem}[Completeness]
    $\safesys$ is complete.
\end{theorem}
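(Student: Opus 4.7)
The plan is to derive weak completeness from the Truth Lemma via the standard canonical-model argument, adapted to the finite-closure setting used above. Since the logic is not compact (as noted just before the theorem), only weak completeness is expected, and its statement is the contrapositive: for any $\phi \in \safelang$, if $\not\vdash \phi$ then $\not\models \phi$.

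First, assume $\not\vdash \phi$. Then the singleton $\{\neg\phi\}$ is consistent in $\safesys$. I would then consider the finite set $\cl(\neg\phi)$ and extend $\{\neg\phi\}$ to a set $\Gamma \subseteq \cl(\neg\phi)$ that is maximal consistent in $\cl(\neg\phi)$, by a finite Lindenbaum-style construction: enumerate the formulas of $\cl(\neg\phi)$ as $\psi_1,\dots,\psi_m$, and at each step add $\psi_k$ if the current set together with $\psi_k$ is consistent, else add $\neg\psi_k$ (which lies in $\cl(\neg\phi)$ by closure under single negations). This process terminates since $\cl(\neg\phi)$ is finite, and produces a $\Gamma$ maximal consistent in $\cl(\neg\phi)$ with $\neg\phi \in \Gamma$.

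Second, $\Gamma$ is by definition a state of the canonical model $M^{\neg\phi}$. Applying the Truth Lemma to $\Gamma$ with the formula $\neg\phi$ (which lies in $\cl(\neg\phi)$), I conclude $M^{\neg\phi},\Gamma \models \neg\phi$, so $M^{\neg\phi},\Gamma \not\models \phi$, and hence $\not\models \phi$. This establishes weak completeness, and the theorem follows immediately.

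I do not expect any real obstacle here: the hard work has already been done in Lemmas \ref{lemma:f} and \ref{lemma:safe}, where the non-normal $\safe$ operator was handled via the group assignment function characterization, and in the Truth Lemma itself. All that remains is the routine passage from a consistent formula to a satisfying state in the canonical model, which goes through verbatim from the standard common-knowledge completeness proof in \cite[Chapter 7]{vDvdHK2007}, modulo the closure now containing $E_G\safe\psi$ for \emph{every} $G \in N^3$ rather than only for subgroups of a fixed $G$.
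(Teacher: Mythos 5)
Your proposal is correct and follows exactly the paper's argument: extend the consistent $\{\neg\phi\}$ to a set maximal consistent in the finite closure $\cl(\neg\phi)$ and apply the Truth Lemma to conclude $M^{\neg\phi},\Gamma\models\neg\phi$. You merely spell out the finite Lindenbaum step that the paper leaves implicit; there is no substantive difference.
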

\begin{proof}
    If $\neg\phi$ is consistent it can be extended to a set $\Gamma$ that is maximal consistent in $\cl(\neg\phi)$. From the truth lemma we get that $M^{\neg\phi},\Gamma \models \neg \phi$.
\end{proof}

\section{Discussion}
\label{sec:discussion}

Pseudo-anonymous public 
announcements $\phi\sa$ are made by an agent inside the system, but not explicitly ``signed" by that agent.
In a sense, such announcements are in-between public announcements $\phi!$ and $K_a\phi!$. 
These announcements are pseudo-anonymous, because, depending on the background knowledge of the 
agents in a system, the identity of the announcer might still be revealed. We showed that like normal public announcements, these operators can actually be expressed in (pure) epistemic logic. 

Similarly to the \textit{Russian Cards Problem} \cite{van2003russian} (see below), if we make assumptions about the \emph{intentions} of the announcer, we \emph{learn more}. Intentional anonymous announcements $\phi\sai$ capture exactly the announcements that are guaranteed to ensure anonymity. It all boils down to \emph{safety}: 
intentional anonymous announcements $\phi\sai$ are exactly public announcements of safety $\safe\phi!$. 
Furthermore, 
the $\phi\sai$ operators can be expressed in epistemic logic extended with only the safety operator $\safe$. We gave a complete axiomatisation of the latter.

There are still many open problems. While the expressivity results ``reduce" the languages with $\phi\sa$ and $\phi\sai$ to equivalent languages we have axiomatisations for, they don't directly give us ``native" axiomatisations of those languages themselves. For the case of pseudo-anonymous announcements, we don't get reduction axioms directly from AML, since the language is less expressive. We note that 
we would get reduction axioms (and thus completeness) immediately if we used a more fine-grained variant
\[[\phi\sa_a]\psi\]
meaning ``after $\phi$ is pseudo-anonymously announced by $a$ (``$a$ writes $\phi$ on the blackboard''), $\psi$ is true" -- 
corresponding to a single action model point (instead of a union) -- with the following semantics:
\[
M,s \models [\phi\sa_a]\psi \Leftrightarrow \left[M,s \models K_a \phi \Rightarrow M^{\phi\dagger},(s,a) \models \psi\right].\]
This would be completely deterministic, corresponding to exactly one pointed action model instead of the union, and like for general action models and union we would have the following:
\[M,s \models [\phi\sa]\psi \text{ iff } M,s \models \bigwedge_{a \in N} [\phi\sa_a]\psi.\]

A conceptually closely related work is van Ditmarsch's analysis of the \emph{Russian Cards Problem} \cite{van2003russian} which also models \emph{safe announcements} -- albeit with another notion of safety, namely that a secret ``card deal" is not revealed instead of not revealing the identity of the announcer. A safe announcement of $\phi$ by $a$ is captured by $[K_a \phi \wedge [K_a\phi]C \cignorant!]$, where $C$ is common knowledge among all agents and $\cignorant$ is the safety or secrecy condition -- corresponding to only the announcer knowing the identity of the announcer in our case. That is again equivalent to the sequence $[K_a\phi!][C\cignorant!]$, which means that $\phi$ is safe to announce for $a$ if $C\cignorant$ holds after the announcement of $K_a\phi$. Our $[\phi\sai]$ operator also satisfies that definition of safety (Lemma \ref{lemma:ck}). This
raises the question whether a safe anonymous announcement is equivalent to a
pseudo-anonymous announcement followed by an announcement that it is common knowledge that no one except the announcer now knows her identity. We can't refer directly to the identity of ``the announcer" in the updated model in the syntax, but even if we could, the answer to the question would be ``no". Let $\onlyone$ mean that only the announcer now knows her identity. To see that
\[[\phi\sai]\psi \leftrightarrow [\phi\sa][C \onlyone!]\psi\]
does not hold, consider Figure \ref{fig:intentions2}: the $\phi\sa$ update has several more states and $C\onlyone$ would not hold in any of them. One could perhaps consider iterations of the form  $[\phi\sa][\onlyone!][\onlyone!]\cdots$, but that is also not equivalent to $[\phi\sai]$ (for any number of iterations): as a counterexample see Figure \ref{fig:intentions1}. The first announcement of $\onlyone$ would remove the state $(t,a)$, since agent $b$ knows in that state the identity of the announcer. 
The second announcement would remove the state $(t,b)$, but then it stops, resulting in a three-state model. The update with safe announcements however, fails as argued earlier -- $\safe\phi$ does not hold in any state in the original model in Figure \ref{fig:intentions1}.

Regarding safety, we can potentially go even further by using some variant of \textit{distributed knowledge} \cite{vanderhoek99}. That would allow us to reason about anonymous announcements that are safe even when non-announcing agents share their knowledge to deduce the identity of the announcer.

Another avenue of further research is to extend the presented formalisms with \textit{quantification over anonymous announcements}. This will allow us to express properties like ``there is a safe anonymous announcement, such that $\phi$ holds afterwards". Quantification over standard public announcements is relatively well-studied \cite{balbianietal:2008,jal,balbiani2015simple,balbiani2015putting,gal-undec,gald,rustamckjournal}, both in the presence of common and distributed knowledge, and we expect many of the existing intuitions hold also in the case of the anonymous announcements.

Also of interest for future work, is to model \emph{self-referential} intentionally anonymous announcements of the form ``after this very announcement, no-one will know", as studied recently in \cite{baltag_topology_2022,fullfixedpoint}\footnote{That also applies to the Russian Cards problem.}.

\paragraph*{Acknowledgments}
This work was supported by ROIS-DS-JOINT (047RP2024).

\bibliographystyle{splncs04}
\bibliography{anon}

\begin{thebibliography}{10}
\providecommand{\url}[1]{\texttt{#1}}
\providecommand{\urlprefix}{URL }
\providecommand{\doi}[1]{https://doi.org/#1}

\bibitem{gal-undec}
{\AA}gotnes, T., Ditmarsch, H., French, T.: The undecidability of quantified
  announcements. Studia Logica  \textbf{104}(4),  597--640 (2016).
  \doi{10.1007/s11225-016-9657-0},
  \url{http://dx.doi.org/10.1007/s11225-016-9657-0}

\bibitem{gald}
{\AA}gotnes, T., Alechina, N., Galimullin, R.: Logics with group announcements
  and distributed knowledge: Completeness and expressive power. Journal of
  Logic, Language and Information  \textbf{31}(2),  141--166 (2022).
  \doi{10.1007/s10849-022-09355-0}

\bibitem{jal}
{\AA}gotnes, T., Balbiani, P., van Ditmarsch, H., Seban, P.: Group announcement
  logic. Journal of Applied Logic  \textbf{8}(1),  62--81 (2010)

\bibitem{rustamckjournal}
{\AA}gotnes, T., Galimullin, R.: {Quantifying over information change with
  common knowledge}. Autonomous Agents and Multi-agent Systems  \textbf{37}(19)
  (2023)

\bibitem{sk}
{\AA}gotnes, T., W{\'{a}}ng, Y.N.: Somebody knows. In: Bienvenu, M., Lakemeyer,
  G., Erdem, E. (eds.) Proceedings of the 18th {KR}. pp. 2--11 (2021).
  \doi{10.24963/kr.2021/1}, \url{https://doi.org/10.24963/kr.2021/1}

\bibitem{balbianietal:2008}
Balbiani, P., Baltag, A., van Ditmarsch, H., Herzig, A., Hoshi, T., Lima, T.D.:
  `{K}nowable' as `known after an announcement'. Review of Symbolic Logic
  \textbf{1(3)},  305--334 (2008)

\bibitem{balbiani2015putting}
Balbiani, P.: Putting right the wording and the proof of the truth lemma for
  apal. Journal of Applied Non-Classical Logics  \textbf{25}(1),  2--19 (2015)

\bibitem{balbiani2015simple}
Balbiani, P., van Ditmarsch, H.: A simple proof of the completeness of apal.
  Studies in Logic  \textbf{8},  65--78 (2015)

\bibitem{baltag_topology_2022}
Baltag, A., Bezhanishvili, N., Fern{\'{a}}ndez{-}Duque, D.: The topology of
  surprise. In: Kern{-}Isberner, G., Lakemeyer, G., Meyer, T. (eds.)
  Proceedings of the 19th {KR} (2022). \doi{10.24963/kr.2022/4}

\bibitem{DBLP:journals/synthese/BaltagM04}
Baltag, A., Moss, L.S.: Logics for epistemic programs. Synthese
  \textbf{139}(2),  165--224 (2004). \doi{10.1023/B:SYNT.0000024912.56773.5E}

\bibitem{barthe14}
Barthe, G., Gaboardi, M., Arias, E.J.G., Hsu, J., Kunz, C., Strub, P.: Proving
  differential privacy in hoare logic. In: Proceedings of the 27th {CSF}. pp.
  411--424. {IEEE} Computer Society (2014). \doi{10.1109/CSF.2014.36}

\bibitem{bradfield_modal_2007}
Bradfield, J., Stirling, C.: Modal mu-calculi. In: Blackburn, P., Van~Benthem,
  J., Wolter, F. (eds.) Handbook of Modal Logic, Studies in Logic and Practical
  Reasoning, vol.~3, pp. 721--756. Elsevier (2007).
  \doi{10.1016/S1570-2464(07)80015-2}

\bibitem{van2003russian}
van Ditmarsch, H.: The russian cards problem. Studia logica  \textbf{75},
  31--62 (2003). \doi{10.1023/A:1026168632319}

\bibitem{vDvdHK2007}
van Ditmarsch, H., van~der Hoek, W., Kooi, B.: Dynamic Epistemic Logic,
  Synthese Library, vol.~337. Springer (2007)

\bibitem{vanditmarsch20}
van Ditmarsch, H., van~der Hoek, W., Kooi, B., Kuijer, L.B.: Arrow update
  synthesis. Information and Computation  \textbf{275},  104544 (2020).
  \doi{10.1016/J.IC.2020.104544}

\bibitem{domingo-ferrer16}
Domingo{-}Ferrer, J., S{\'{a}}nchez, D., Soria{-}Comas, J.: Database
  Anonymization: Privacy Models, Data Utility, and Microaggregation-based
  Inter-model Connections. Synthesis Lectures on Information Security, Privacy,
  {\&} Trust, Morgan {\&} Claypool Publishers (2016).
  \doi{10.2200/S00690ED1V01Y201512SPT015}

\bibitem{Fagin:1995hc}
Fagin, R., Halpern, J.Y., Moses, Y., Vardi, M.Y.: Reasoning About Knowledge.
  The {MIT} Press, Cambridge, Massachusetts (1995)

\bibitem{goranko07}
Goranko, V., Otto, M.: Model theory of modal logic. In: Blackburn, P., {van
  Benthem}, J., Wolter, F. (eds.) Handbook of Modal Logic, Studies in Logic and
  Practical Reasoning, vol.~3, pp. 249--329. Elsevier (2007)

\bibitem{halpernO05}
Halpern, J.Y., O'Neill, K.R.: Anonymity and information hiding in multiagent
  systems. Journal of Computer Security  \textbf{13}(3),  483--512 (2005).
  \doi{10.3233/JCS-2005-13305}

\bibitem{halpernO08}
Halpern, J.Y., O'Neill, K.R.: Secrecy in multiagent systems. {ACM} Transactions
  on Information and System Security  \textbf{12}(1),  5:1--5:47 (2008).
  \doi{10.1145/1410234.1410239}

\bibitem{vanderhoek99}
van~der Hoek, W., van Linder, B., Meyer, J.J.: Group knowledge is not always
  distributed (neither is it always implicit). Mathematical Social Sciences
  \textbf{38}(2),  215--240 (1999).
  \doi{https://doi.org/10.1016/S0165-4896(99)00013-X}

\bibitem{hughesS04}
Hughes, D.J.D., Shmatikov, V.: Information hiding, anonymity and privacy: a
  modular approach. Journal of Computer Security  \textbf{12}(1),  3--36
  (2004). \doi{10.3233/JCS-2004-12102}

\bibitem{jiang25}
Jiang, J., Naumov, P.: De re/de dicto distinction: a logicians' perspective on
  data anonymity. Journal of Cybersecurity  \textbf{11}(1) (2025).
  \doi{10.1093/CYBSEC/TYAE025}

\bibitem{fullfixedpoint}
Li, Y., Ren, J., {\AA}gotnes, T.: The surprise exam in full modal fixed-point
  logic. In: Proceedings of the The 6th International Conference on Logic and
  Argumentation (CLAR 2025). LNCS, Springer (2025), to appear.

\bibitem{machanavajjhala07}
Machanavajjhala, A., Kifer, D., Gehrke, J., Venkitasubramaniam, M.:
  \emph{L}-diversity: Privacy beyond \emph{k}-anonymity. {ACM} Transactions on
  Knowledge Discovery from Data  \textbf{1}(1), ~3 (2007).
  \doi{10.1145/1217299.1217302}

\bibitem{mano10}
Mano, K., Kawabe, Y., Sakurada, H., Tsukada, Y.: Role interchange for anonymity
  and privacy of voting. Journal of Logic and Computation  \textbf{20}(6),
  1251--1288 (2010). \doi{10.1093/LOGCOM/EXQ013}

\bibitem{meyer_1995}
Meyer, J.C., van~der Hoek, W.: Epistemic logic for {AI} and computer science,
  Cambridge tracts in theoretical computer science, vol.~41. CUP (1995)

\bibitem{moreN11b}
More, S.M., Naumov, P.: Hypergraphs of multiparty secrets. Annals of
  Mathematics and Artificial Intelligence  \textbf{62}(1-2),  79--101 (2011).
  \doi{10.1007/S10472-011-9252-Z}

\bibitem{moreN11}
More, S.M., Naumov, P.: Logic of secrets in collaboration networks. Annals of
  Pure and Applied Logic  \textbf{162}(12),  959--969 (2011).
  \doi{10.1016/J.APAL.2011.06.001}

\bibitem{naumovO23}
Naumov, P., Orejola, O.: Shhh! the logic of clandestine operations. In:
  Proceedings of the 32nd {IJCAI}. pp. 3304--3311. ijcai.org (2023).
  \doi{10.24963/IJCAI.2023/368}

\bibitem{anon_terminology}
Pfitzmann, A., Hansen, M.: A terminology for talking about privacy by data
  minimization: Anonymity, unlinkability, undetectability, unobservability,
  pseudonymity, and identity management (Aug 2010),
  \url{http://dud.inf.tu-dresden.de/literatur/Anon\_Terminology\_v0.34.pdf},
  v0.34

\bibitem{plaza89}
Plaza, J.: Logics of public communications. In: Proceedings of the 4th {ISMIS}.
  pp. 201--216. Oak Ridge National Laboratory (1989)

\bibitem{schneiderS96}
Schneider, S.A., Sidiropoulos, A.: {CSP} and anonymity. In: Bertino, E., Kurth,
  H., Martella, G., Montolivo, E. (eds.) Proceedings of the 4th {ESORICS}.
  LNCS, vol.~1146, pp. 198--218. Springer (1996).
  \doi{10.1007/3-540-61770-1\_38}

\bibitem{sweeney02}
Sweeney, L.: k-anonymity: {A} model for protecting privacy. International
  Journal of Uncertainty, Fuzziness and Knowledge-Based Systems
  \textbf{10}(5),  557--570 (2002). \doi{10.1142/S0218488502001648}

\bibitem{syverson99}
Syverson, P.F., Stubblebine, S.G.: Group principals and the formalization of
  anonymity. In: Wing, J.M., Woodcock, J., Davies, J. (eds.) Proceedings of the
  {FM} 1999. LNCS, vol.~1708, pp. 814--833. Springer (1999).
  \doi{10.1007/3-540-48119-2\_45}

\bibitem{tsukada09}
Tsukada, Y., Mano, K., Sakurada, H., Kawabe, Y.: Anonymity, privacy, onymity,
  and identity: {A} modal logic approach. In: Proceedings of the 12th {CSE}.
  pp. 42--51. {IEEE} Computer Society (2009). \doi{10.1109/CSE.2009.251}

\bibitem{tsukada6}
Tsukada, Y., Sakurada, H., Mano, K., Manabe, Y.: On compositional reasoning
  about anonymity and privacy in epistemic logic. Annals of Mathematics and
  Artificial Intelligence  \textbf{78}(2),  101--129 (2016).
  \doi{10.1007/S10472-016-9516-8},
  \url{https://doi.org/10.1007/s10472-016-9516-8}

\bibitem{VANEIJCK2007159}
{van Eijck}, J., Orzan, S.: Epistemic verification of anonymity. Electronic
  Notes in Theoretical Computer Science  \textbf{168},  159--174 (2007).
  \doi{https://doi.org/10.1016/j.entcs.2006.08.026}, proceedings of the 2nd
  {VODCA}

\bibitem{xiongA23}
Xiong, Z., {\AA}gotnes, T.: The logic of secrets and the interpolation rule.
  Annals of Mathematics and Artificial Intelligence  \textbf{91}(4),  375--407
  (2023). \doi{10.1007/S10472-022-09815-0}

\bibitem{ye07}
Ye, X., Li, Z., Li, Y.: Capture inference attacks for k-anonymity with privacy
  inference logic. In: Ramamohanarao, K., Krishna, P.R., Mohania, M.K.,
  Nantajeewarawat, E. (eds.) Proceedings of the 12th {DASFAA}. LNCS, vol.~4443,
  pp. 676--687. Springer (2007). \doi{10.1007/978-3-540-71703-4\_57}

\end{thebibliography}

\end{document}